\newtheorem{theorem}{Theorem}
\newtheorem{lemma}{Lemma}
\newtheorem{definition}{Definition}
\begin{document}

\title{Supervised Learning Guarantee for Quantum AdaBoost }

\author{Yabo Wang}
\affiliation{
	Key Laboratory of Systems and Control, Academy of Mathematics and Systems Science, Chinese Academy of Sciences, Beijing 100190, P. R. China
}
\affiliation{
	University of Chinese Academy of Sciences, Beijing 100049, P. R. China
}
\author{Xin Wang}
\affiliation{
 Department of Automation, Tsinghua University, Beijing 100084, P. R. China
}
\author{Bo Qi}
\email{qibo@amss.ac.cn}
\affiliation{
	Key Laboratory of Systems and Control, Academy of Mathematics and Systems Science, Chinese Academy of Sciences, Beijing 100190, P. R. China
}
\affiliation{
	University of Chinese Academy of Sciences, Beijing 100049, P. R. China
}

\author{Daoyi Dong}
\affiliation{
	School of Engineering, Australian National University, Canberra ACT 2601, Australia
}

\date{\today}

\begin{abstract}
In the noisy intermediate-scale quantum (NISQ) era, the capabilities of  variational quantum algorithms are greatly constrained due to a limited number of qubits and shallow depth of quantum circuits. We may view these  variational quantum algorithms as weak learners in supervised learning. Ensemble methods are general approaches to combining weak learners to construct a strong one in machine learning. In this paper, by focusing on classification, we theoretically establish and numerically verify a learning guarantee for quantum adaptive boosting (AdaBoost). The supervised learning risk bound describes how the prediction error of quantum AdaBoost on binary classification decreases as the number of boosting rounds and sample size increase.  We further empirically demonstrate the advantages of quantum AdaBoost by focusing on a 4-class classification.  The quantum AdaBoost not only outperforms several other ensemble methods, but also in the presence of noise, it can surpass the ideally noiseless but unboosted primitive classifier   after only a few boosting rounds. Our work indicates that in the current NISQ era, introducing appropriate ensemble methods is particularly valuable in improving the performance of quantum machine learning algorithms.
\end{abstract}

\maketitle
\section{Introduction}

Machine learning has achieved remarkable success in various fields with a wide range of applications \cite{krenn2023artificial,xiao2023practical,lu2018separability,luo2023detecting,shingu2021boltzmann}. A major objective of machine learning is to develop efficient and accurate prediction algorithms, even for large-scale problems \cite{mohri2018foundations,wright2022high}. The figure of merit, prediction error, can be decomposed into the summation of training  and generalization errors. Both of them should be made small to guarantee an accurate prediction. However, there is a tradeoff between reducing the training error and restricting the generalization error through controlling the size of the hypothesis set, known as Occam’s Razor principle \cite{rasmussen2000occam}.
	
	For classical machine learning, empirical studies have demonstrated that the training error can often be effectively minimized  despite the non-convex nature and abundance of spurious minima in training loss landscapes \cite{li2018visualizing,geiger2021landscape}. This observation has been explained by the theory of over-parametrization  \cite{rocks2022memorizing,baldassi2022learning,allen2019convergence,larocca2023theory}. However, it is still  difficult to theoretically describe how to guarantee a good generalization, which is one of the key problems to be solved in classical machine learning.
	
	Owing to the immense  potential of quantum computing, extensive efforts have been dedicated to developing quantum machine learning \cite{biamonte2017quantum,beer2023quantum,uvarov2020machine,banchi2021generalization,cerezo2022challenges}. However, in the noisy intermediate-scale quantum (NISQ) era, the capability of quantum machine learning is greatly constrained due to a  limited number of qubits and shallow depth of the involved quantum circuits. Algorithms based on parameterized quantum circuits (PQCs) have become the leading candidates to yield potential quantum advantages in the era of NISQ \cite{benedetti2019parameterized}. The basic idea behind them is that these parameterized quantum models can provide  representational and/or computational powers beyond what is possible with classical models \cite{abbas2021power,schuld2021effect,liu2021rigorous,Boris2023,huang2022quantum}.

There are mainly three kinds of parameterized quantum models \cite{jerbi2023quantum}: (a) explicit models \cite{Cerezo2021v,Benedetti2019}, where data are first encoded into quantum states, after undergoing a PQC, the quantum states are measured and the information is used to update the variational parameters through a classical routine; (b) implicit kernel models \cite{havlivcek2019,schuld2019quantum}, where the kernel matrices of the encoding data are computed through quantum circuits, and then used to label data; (c) re-uploading models \cite{perez2020data}, where encoding and parameterized circuits are interleaved. A unified framework has been set for the three quantum models, and it was pointed out that the advantages of quantum machine learning may lie beyond kernel methods~\cite{jerbi2023quantum}. They found that although kernel methods are guaranteed to achieve a lower training error, their generalization power is poor. Thus, both the training and generalization errors should be taken into account when evaluating the prediction accuracy~\cite{wang2024power}.

	It has been proved that good generalization can be guaranteed from few training data for explicit quantum machine learning models~\cite{caro2022gen}. However, in contrast to the classical case, training quantum models is notoriously difficult. This is because when training a PQC, it often suffers from the phenomenon of a barren plateau~\cite{mcclean2018barren,haug2021capacity,Cerezo2021c,marrero2021entanglement,zhao2021analyzing,wang2023trainability,wang2023eha}, where the cost gradient vanishes exponentially fast. Another bottleneck for training a PQC is that the landscape of  the cost function may be swamped with  (exponentially) many spurious local minima~\cite{anschuetz2022quantum,you2021exponentially}. In this sense, most quantum learning algorithms can be viewed as weak learners in the language of supervised machine learning.
	
	To improve the performance of variational quantum algorithms, we can employ ensemble methods as inspired by the classical ensemble learning. There are various kinds of ensemble methods, e.g., bagging \cite{breiman1996bagging}, plurality voting \cite{lam1997application,lin2003performance} and boosting \cite{schapire1999brief}. It has been suggested that an optimized weighted mixture of concepts (learners), e.g., PAC-Bayesian~\cite{mcallester1999pac}, is a promising candidate for further research~\cite{jiang2019fantastic}. Thus, adaptive boosting (AdaBoost) is particularly suitable for improving the performance of weak learners. It sequentially trains a series of base learners with adaptively adjusted cost functions, and then combines them properly to obtain a strong learner, which has higher prediction accuracy compared to  base learners.  For classical ensemble learning, there has been a rich theoretical analysis on AdaBoost \cite{freund1997decision,bartlett1998boosting,gronlund2019margin}, and it has been verified to be effective in practice \cite{sun2021adagcn,drucker1993boosting,li2008adaboost,zhang2019research}.

Various quantum versions of classical AdaBoost have been proposed~\cite{arunachalam2020,wang2021quantum,ohno2022}. In these works, they employed quantum subroutines, e.g., mean estimation and amplitude amplification, to update quantum weak classifiers and estimate the weighted errors to reduce the time complexity. Therefore, the realizations of these quantum versions of AdaBoost are beyond the scope of current NISQ circuits. Recently, ensemble learning has been proposed with NISQ circuits~\cite{Li2023ense,incudini2023}. In Ref.~\cite{Li2023ense}, by focusing on error mitigation, it was empirically shown that both the bootstrap aggregating and the adaptive boosting can significantly enhance the performance of variational quantum classifiers for both classical and quantum datasets, and the adaptive boosting outperforms the bootstrap aggregating. It was pointed out that with ensemble techniques resources can be saved in terms of the number of qubits, gates, and training samples~\cite{incudini2023}.

In this paper, we provide a theoretical learning guarantee for quantum AdaBoost on binary classification,  describing how its prediction error converges to 0 as the number of boosting rounds and sample size increase.  The learning risk bound is also verified via numerical simulations. Then by focusing on a 4-class classification, we further investigate advantages of quantum AdaBoost. In absence of noise, we find that the quantum AdaBoost outperforms some other ensemble methods.  In the presence of weak noise, we demonstrate that the prediction accuracy of quantum AdaBoost is even higher than that of the ideally noiseless but unboosted primitive classifier after only a few boosting rounds. This reinforces the finding in Ref.~\cite{Li2023ense} that ensemble learning can help mitigate the impact of noises which are inherent in NISQ circuits. Our findings indicate that quantum AdaBoost is particularly valuable for improving the performance of quantum machine learning, especially in the NISQ era.

The paper is organized as follows. We  first introduce the quantum classifier and quantum AdaBoost  in Section~\ref{preliminary}. Then we present  our theoretical and empirical results of quantum AdaBoost on binary classification  in Section~\ref{main}. Next, we empirically demonstrate advantages of quantum AdaBoost by focusing on a 4-class classification in Section~\ref{extension}. Section~\ref{conclusion} concludes the paper.

\section{\label{preliminary}Quantum Classifier and Quantum AdaBoost}
	\subsection{Quantum Classifier}
	We start with briefly introducing some quantum notation.
	In quantum computing, information is described in terms of quantum states. For an $N$-qubit system, the quantum state  $\rho$ can be mathematically represented as a positive semi-definite Hermitian matrix $\rho\in \mathbb{C}^{2^N\times 2^N}$ with  $\mathrm{Tr}\left[  \rho\right] =1$. The elementary quantum gates are mathematically described by unitary matrices. A quantum gate $U$ acting on a quantum state $\rho$ takes the
	state to the output state as $U\rho U^{\dag}$ where $U^{\dag}$ is the conjugate and transpose of $U$. When measuring an observable $O$, which is a Hermitian operator, at quantum state $\rho$, its expectation is ${\rm Tr}\left[O\rho \right]$.

	For a $D$-class classification problem, suppose that both  the training and test data are independent and identically distributed~(i.i.d.)~according to some fixed but unknown distribution $\mathcal{D}$ defined over the sample and label space $\mathcal{X}\times\mathcal{Y}$.  When the training set $S=\left\lbrace \left( \boldsymbol{x}_i, y_i\right) \right\rbrace_{i=1}^{n} $ is classical, we may first choose a  quantum encoding circuit  to embed the classical data $\boldsymbol{x}_i$ into quantum state $\rho\left( \boldsymbol{x}_i\right) $ \cite{schuld2021effect,goto2021universal,lloyd2020quantum}. We thus focus on the explicit quantum model~\cite{jerbi2023quantum}. Without loss of generality, we only consider the case where the data are quantum in the following, namely,  $S={\left\lbrace \left( \rho\left( \boldsymbol{x}_i\right) , y_i\right) \right\rbrace} ^n_{i=1}\subset \mathcal{X}\times\mathcal{Y}$.  For a $D$-class classification, denote $\mathcal{Y}=\left\lbrace 1,\cdots,D\right\rbrace \triangleq\left[D \right] $.
	
	To label $\rho(\boldsymbol{x})$, a quantum hypothesis or classifier $h_{\boldsymbol{\theta}}(\cdot)$  can be described in the form of
	\begin{equation}\label{hypothesis}
	h_{\boldsymbol{\theta}}\left(\boldsymbol{x} \right) =\mathop{\arg\max}\limits_{d\in \left[D \right] }{\rm Tr}\left[ \Pi_{d}U\left(\boldsymbol{\theta} \right)\rho\left( \boldsymbol{x}\right)U^{\dagger}\left(\boldsymbol{\theta} \right)\right].
	\end{equation}
	Here,  $\left\lbrace \Pi_d\right\rbrace ^{D}_{d=1}$ are disjoint projectors with $\Pi_d$  relating to the $d$th class for $d\in\left[ D \right] $, and $U\left( \boldsymbol{\theta}\right) $ describes the action of a PQC with $\boldsymbol{\theta}$ being the trainable or variational parameters. For a $D$-class classification, we may measure  $\lceil \log_2 D \rceil$ qubits of the quantum circuits in the computational basis. Here,  $\lceil  \cdot \rceil$ denotes the roundup function, and each computational basis corresponds to a projector representing a particular class.
	
	\begin{figure}[ht]
		\centering
		\includegraphics[width=0.7\linewidth]{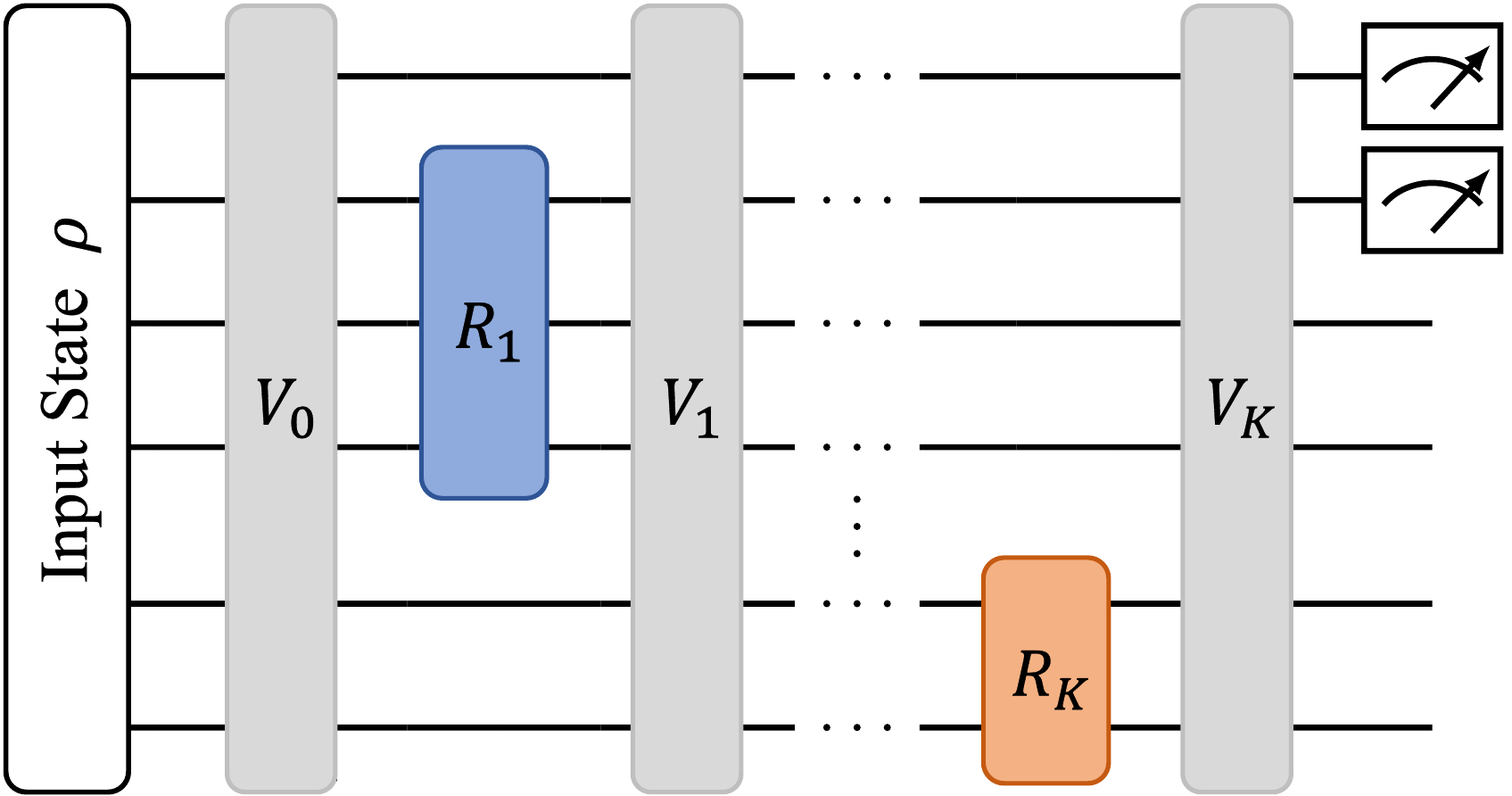}
		\caption{\label{fig:PQC}Schematic of a PQC with $K$ independent and trainable gates. Each trainable gate $R_k$ is  a parameterized multi-qubit rotational gate and can be implemented efficiently.}
	\end{figure}
	
	To be specific, as illustrated in Fig.~\ref{fig:PQC}, suppose that the employed PQC is composed of a total number of $K$ independent parameterized gates and a set of non-trainable gates $\{V_k\}^K_{k=0}$, whose action can be described as
	\begin{equation}\label{pqc}
	U\left(\boldsymbol{\theta} \right) =\prod ^{K}_{k=1}\Big{[}V_k{R_{k}^{\left(i_{k}, j_{k}\right) }\left(\theta_{k} \right) }\Big{]} \cdot V_0,
	\end{equation}
	where $\boldsymbol{\theta}=\left(\theta_{1}, \cdots , \theta_{K} \right) $ denotes a $K$-dimensional parameter vector. For each $k$, the trainable gate $R_{k}^{\left(i_{k}, j_{k}\right) }\left(\theta_{k} \right) $ denotes a rotational gate with angle $\theta_{k}$ around a $j_{k}$-qubit Pauli tensor product operator $P_{k}$, which acts non-trivially on the $i_{k}$th  through to the $\left( i_{k}+j_{k}-1\right) $th qubits, namely,
	\begin{align*}
	&R_{k}^{\left(i_{k}, j_{k}\right) }\left(\theta_{k} \right)
	=I^{\otimes \left( i_{k}-1\right) }\otimes e^{-i\frac{\theta_{k}}{2}P_{k}}\otimes I^{\otimes \left( n-i_{k}-j_{k}+1\right) }.
	\end{align*}
In practice, these multi-qubit rotational gates can be  implemented efficiently by some typical two-qubit  gates and parameterized single-qubit gates. For example, as illustrated in Fig.~\ref{fig:RZZ}, the multi-qubit rotational gates around $z$ axis can be implemented by a single-qubit rotational gate around $z$ axis and some two-qubit CNOT gates.
	\begin{figure}[t]
		\centering
		\includegraphics[width=0.35\textwidth]{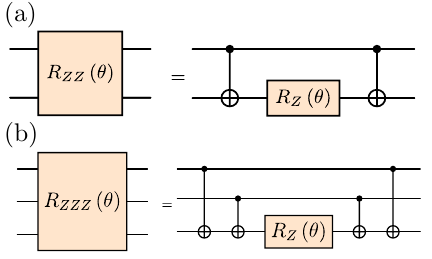}
		\caption{\label{fig:RZZ} Hardware-efficient implementations of multi-qubit rotational gates. (a) The module of $2$-qubit rotational gate $R_{ZZ}\left( \theta\right) $ around  $Z\otimes Z$. (b) The module of $3$-qubit rotational gate $R_{ZZZ}\left( \theta\right) $ around  $Z\otimes Z\otimes Z$.}
	\end{figure}

	The prediction error or expected risk of the quantum classifier $h_{\boldsymbol{\theta}}$ is defined as
	\begin{equation}\label{prediction}
	R\left( h_{\boldsymbol{\theta}}\right)=\mathop{\mathbb{E}}\limits_{\left(\boldsymbol{x},y\right) \sim\mathcal{D}}{\mathbb{I}_{ h_{\boldsymbol{\theta}}\left( \boldsymbol{x}\right) \neq y}} =\mathop{\mathbb{P}}\limits_{\left(\boldsymbol{x},y\right)\sim\mathcal{D}}{\left[ \,h_{\boldsymbol{\theta}}\left( \boldsymbol{x}\right)\neq y \,\right] } .
	\end{equation}
Here, $\mathbb{I}_{\mathcal{A}}$ denotes the indicator function of an event $\mathcal{A}$. If  $\mathcal{A}$ is true, $\mathbb{I}_{\mathcal{A}}=1$, and $\mathbb{I}_{\mathcal{A}}=0$, otherwise.
	
	The prediction error of a hypothesis is not directly accessible, since both the label of unseen data and the distribution $\mathcal{D}$ are unavailable. However, we  can take the training error or empirical risk of $h_{\boldsymbol{\theta}}$ as a proxy, defined as
	\begin{equation}\label{training}
	\widehat{R}_{S}\left( h_{\boldsymbol{\theta}}\right)=\frac{1}{n}\sum_{i=1}^{n}{\mathbb{I}_{h_{\boldsymbol{\theta}}\left( \boldsymbol{x}_{i}\right)\neq y_{i}}}.
	\end{equation}
	The difference between the prediction error $R\left( h_{\boldsymbol{\theta}}\right)
	$ and the training error $\widehat{R}_{S}\left( h_{\boldsymbol{\theta}}\right)$ is
	referred to as the generalization error, which reads
	\begin{equation}\label{generalization}
	\texttt{gen}\left( h_{\boldsymbol{\theta}}\right)=R\left( h_{\boldsymbol{\theta}}\right)-\widehat{R}_{S}\left( h_{\boldsymbol{\theta}}\right).
	\end{equation}
	It is clear that to make accurate predictions, both the training and generalization errors should be small.

	\subsection{Quantum AdaBoost}
	
Recall that for a $D$-class classification, the projectors $\{\Pi_d\}^D_{d=1}$ may be selected to correspond to the computational basis of  $\lceil \log_2 D \rceil$ qubits of the quantum circuits. Usually, we can train a quantum classifier $h_{\boldsymbol{\theta}}$ described by Eq.~(\ref{hypothesis})  utilizing the cross entropy function
\begin{equation}\label{loss0}
	{\mathcal{L}_0\left( \boldsymbol{\theta};S\right)  = - \sum_{i=1}^{n} \frac{1}{n} \boldsymbol{y}_i^{\top} \log\left( \boldsymbol{p}_i(\boldsymbol{\theta})\right) }.
	\end{equation}
Here, each label $y_i\in \left[D \right]$  has been transformed into a $D$-dimensional one-hot vector denoted by $\boldsymbol{y}_i$, and $\boldsymbol{p}_i (\boldsymbol{\theta}) = \left[ p_{i,1}(\boldsymbol{\theta}) ,\cdots,p_{i,D }(\boldsymbol{\theta} )\right] ^{\top}$ with $$p_{i,d}(\boldsymbol{\theta} ) = \operatorname{Tr}\left[ \Pi_{d}U\left(\boldsymbol{\theta} \right)\rho\left( \boldsymbol{x}_{i}\right)U^{\dagger}\left(\boldsymbol{\theta} \right)\right]$$  for each $d \in \left[D \right]$.
Note that in Eq.~(\ref{loss0}), the  logarithmic function $\log(\cdot)$ acts on each element of the vector
$\boldsymbol{p}_i (\boldsymbol{\theta})$, and the weights of all training data are set to be equal.

Intuitively, a quantum  classifier is considered weak if its training error is only slightly less than that of a random guess, which is $\frac{D-1}{D}$. Conversely, it is considered strong if its training error is only slightly greater than 0. A more precise definition can be found in Ref.~\cite{mohri2018foundations}. In the NISQ era, due to shallow depth, a limited number of qubits, and inevitable noises in quantum circuits, quantum classifiers are generally weak.

Inspired by the classical multi-class AdaBoost \cite{hastie2009}, we now introduce the quantum AdaBoost. It sequentially trains a series of weak quantum classifiers by adaptively adjusting the weights of the data in the cross entropy function, and then combines them properly to yield a strong quantum classifier.

Assume that the quantum AdaBoost consists of $T$ boosting rounds in total. In practice,  $T$ is  selected via cross-validation. Denote by $h_{\boldsymbol{\theta}_t}$ the weak quantum base classifier at the $t$th round, where $\boldsymbol{\theta}_t=\left(\theta_{1,t}, \cdots , \theta_{K,t} \right) $ represent the correponding variational parameters of the employed PQC.
To find  $h_{\boldsymbol{\theta}_t}$, we can utilize the following cross-entropy loss function:
	\begin{equation}\label{loss1}
	{\mathcal{L}_t\left( \boldsymbol{\theta};S\right)  = - \sum_{i=1}^{n} \mathcal{D}_{t}\left(i \right)  \boldsymbol{y}_i^{\top} \log\left( \boldsymbol{p}_i(\boldsymbol{\theta})\right) }.
	\end{equation}
Here, $\mathcal{D}_t(i)$ denotes the distribution over the training data  $S=\{(\rho\left( \boldsymbol{x}_i\right) , y_i)\}^n_{i=1}$ at round $t$. Initially, $\mathcal{D}_1(i)=\frac{1}{n}$, namely,  the weights of all training data  are initially set equal as that in Eq.~(\ref{loss0}).  The corresponding weighted error $\epsilon_t$ of $h_{\boldsymbol{\theta}_t}$ on the training data weighted by the distribution $\mathcal{D}_t$ reads
	\begin{equation}\label{weighted0}
	\epsilon_t=\sum_{i=1}^{n}{\mathcal{D}_{t}\left(i \right)\mathbb{I}_{h_{\boldsymbol{\theta}_t}\left( \boldsymbol{x}_{i}\right) \neq y_{i} }}.
	\end{equation}

Once $h_{\boldsymbol{\theta}_t}$ is chosen, we can update the distribution over data as $$\mathcal{D}_{t+1}\left(i \right)\propto\mathcal{D}_{t}\left(i \right)\exp{\left[ \alpha_{t}\mathbb{I}_{y_{i}\neq h_{\boldsymbol{\theta}_t}\left( \boldsymbol{x}_{i}\right) }\right] },$$ where $\alpha_t=\log{\frac{1-\epsilon_{t}}{\epsilon_{t}}}+\log{\left( D-1\right) }$ is nonnegative. The basic idea is
increasing the weight on the sample $(\rho\left( \boldsymbol{x}_i\right) , y_i)$ if it is incorrectly classified by $h_{\boldsymbol{\theta}_t}$, namely, $h_{\boldsymbol{\theta}_t}(\boldsymbol{x}_i)\neq y_i$,  and, on the contrary, decreasing the weight if $(\rho\left( \boldsymbol{x}_i\right), y_i)$ is correctly classified. Thus, in the next round of boosting, the quantum classifier can focus more on the points incorrectly classified.

After $T$ rounds of boosting, the classifier returned by quantum AdaBoost takes the form of $$h= \mathop{\arg\max}\limits_{d\in \left[D \right] }{\sum_{t=1}^{T}{\alpha_{t}\mathbb{I}_{h_{\boldsymbol{\theta}_t}=d}}},$$
with the weight $\alpha_t$ for the $t$th weak base classifier $h_{\boldsymbol{\theta}_t}$.
	
We summarize the procedures of $D$-class quantum AdaBoost in Algorithm~\ref{algorithm}. Note that the main procedures are similar to those in Ref.~\cite{Li2023ense}. When training a base classifier, we can use other loss functions besides the cross-entropy loss function. The key is to make sure that the obtained base classifier should surpass the random guess, namely, its weighted error is less than $\frac{D-1}{D}$.
	
	\begin{algorithm}
		\SetKwInOut{Input}{Input}\SetKwInOut{Output}{Output}
		\Input{Training set  $S={\left\lbrace \left( \rho\left( \boldsymbol{x}_i\right) ,y_i\right) \right\rbrace}^n_{i=1}$
			\\
			Number of boosting rounds $T$\\
			Distribution $\mathcal{D}_{1}\left(i \right) = \frac{1}{n}$, for $i \in\left[n \right]$
		}
		\BlankLine
		\For{$t\leftarrow 1$ \KwTo $T$}{
		Train a weak quantum classifier $h_{\boldsymbol{\theta}_t}$ with $\epsilon_{t}<\frac{D-1}{D}$
			\BlankLine
			$\alpha_{t}\leftarrow\log{\frac{1-\epsilon_{t}}{\epsilon_{t}}}+\log{\left( D-1\right) }$
			\BlankLine
			\For{$i\leftarrow 1$ \KwTo $n$}{
				$\mathcal{D}_{t+1}\left(i \right)\leftarrow \mathcal{D}_{t}\left(i \right)\exp{\left[ \alpha_{t}\mathbb{I}_{y_{i}\neq h_{\boldsymbol{\theta}_t}\left( \boldsymbol{x}_{i}\right) }\right] }$
			}
			\BlankLine
			Normalize $\left\lbrace\mathcal{D}_{t+1}\left(i \right)\right\rbrace_{i=1}^{n}$
			
		}
		$h\leftarrow \mathop{\arg\max}\limits_{d\in \left[D \right] }{\sum_{t=1}^{T}{\alpha_{t}\mathbb{I}_{h_{\boldsymbol{\theta}_t}=d}}}$
		\BlankLine
		\Output{Predictor $h$}
		\caption{$D$-Class Quantum AdaBoost}\label{algorithm}
	\end{algorithm}

	\section{\label{main} Learning Guarantee for Quantum AdaBoost on Binary Classification}
	In this section, by focusing on binary classification, we establish and verify a prediction risk bound for quantum AdaBoost.
	
	\subsection{Quantum AdaBoost on Binary Classification}
	
	For multi-class classification, an alternative approach is to reduce the problem to multiple binary classification tasks. For each task, a binary classifier is returned, and the multi-class classifier is defined by a combination of these binary classifiers. Two standard reduction techniques are one-versus-the-rest and one-versus-one \cite{mohri2018foundations}. In this subsection, we focus on the basic binary  quantum AdaBoost.
	
	For  binary classification, it is more convenient to denote the label space by $\mathcal{Y}=\left\lbrace -1, +1\right\rbrace $. The  quantum classifier $h_{\boldsymbol{\theta}}\left( \cdot\right) $ can be defined in terms of the Pauli-$Z$ operator $Z=\begin{pmatrix}
	1 & 0\\
	0 & -1
	\end{pmatrix}$ as
	\begin{equation}\label{hypothesis2}
	h_{\boldsymbol{\theta}}\left(\boldsymbol{x} \right) ={\rm sgn}\left({\rm Tr}\left[ Z U\left(\boldsymbol{\theta} \right)\rho\left( \boldsymbol{x}\right)U^{\dagger}\left(\boldsymbol{\theta} \right)\right]\right),
	\end{equation}
where the sign function $\text{sgn}\left( x \right)$, equals $1$ if $x\textgreater \,0$, $-1$ if $x\textless \,0$, and 0 otherwise.
	
	It is straightforward to verify that for a sample $\left( \rho\left( \boldsymbol{x}\right), y\right) $, the following  important relation holds:
	\begin{equation}\label{equality}
	\mathbb{I}_{y\neq h_{\boldsymbol{\theta}}\left( \boldsymbol{x}\right)}
	=\mathbb{I}_{yh_{\boldsymbol{\theta}}\left( \boldsymbol{x}\right)\le 0 }.
	\end{equation}
By employing Eq.~(\ref{equality}) and inspired by the classical binary AdaBoost \cite{mohri2018foundations}, we can modify Algorithm~\ref{algorithm} slightly  to make it more suitable for binary classification as presented in Algorithm~\ref{algorithm2}.
	
\begin{algorithm}
		\SetKwInOut{Input}{input}\SetKwInOut{Output}{output}
		\Input{Training set  $S={\left\lbrace \left( \rho\left( \boldsymbol{x}_i\right) , y_i\right) \right\rbrace}^n_{i=1}$
			\\
			Number of boosting rounds $T$\\
			Distribution $\mathcal{D}_{1}\left(i \right) = \frac{1}{n}$, for $i \in\left[n \right]$
		}
		\BlankLine
		\For{$t\leftarrow 1$ \KwTo $T$}{
			Train a weak quantum classifier $h_{\boldsymbol{\theta}_t}$  with  $\epsilon_{t}<\frac{1}{2}$
			\BlankLine
			$\alpha_{t}\leftarrow\frac{1}{2}\log{\frac{1-\epsilon_{t}}{\epsilon_{t}}}$
			\BlankLine
			$Z_{t} \leftarrow 2{\left[ \epsilon_{t}\left(1-\epsilon_{t} \right) \right]}^{\frac{1}{2}} \ $ ~~  //normalization factor
			\BlankLine
			\For{$i\leftarrow 1$ \KwTo $n$}{
				$\mathcal{D}_{t+1}\left(i \right)\leftarrow \frac{\mathcal{D}_{t}\left(i \right)\exp{\left[ -\alpha_{t}y_{i}h_{\boldsymbol{\theta}_t}\left(x_{i} \right)\right]  }}{Z_{t}}$
			}
		}
		$h\leftarrow \text{sgn}{\left( \sum_{t=1}^{T}{\alpha_{t}h_{\boldsymbol{\theta}_t}}\right) }$
		\BlankLine
		\Output{Predictor $h$}
		\caption{Binary Quantum AdaBoost}\label{algorithm2}
	\end{algorithm}
	
	Different from Algorithm~\ref{algorithm}, the quantum classifier $h_{\boldsymbol{\theta}_t}$  in Algorithm~\ref{algorithm2}  is described by Eq.~(\ref{hypothesis2}). The distribution over data  updates according to
	$$\mathcal{D}_{t+1}\left(i \right)=\frac{\mathcal{D}_{t}\left(i \right)\exp{\left[ -\alpha_{t}y_{i}h_{\boldsymbol{\theta}_t}\left(x_{i} \right)\right]  }}{Z_{t}}.$$
The basic idea is the same as that in Algorithm~\ref{algorithm}, namely, letting the quantum classifier in the next round focus more on the points incorrectly classified in the current round.
It can be verified that the weight $\alpha_t=\frac{1}{2}\log \frac{1-\epsilon_t}{\epsilon_t}$  is chosen to minimize the upper bound of the empirical risk $\widehat{R}_{S}\left( h\right) $ of the binary quantum AdaBoost~\cite{mohri2018foundations}.

	\subsection{Learning Guarantee}

	We establish the learning risk bound for  binary quantum AdaBoost in the following theorem, whose proof can be found in Appendix~\ref{proofth01}.
	
	\begin{theorem}\label{th01}
		For the binary  quantum AdaBoost Algorithm~\ref{algorithm2}, assume that the employed PQC has a total number of $K$ independent parameterized gates. Then for  $0<\delta<1$, with a probability  at least $1-\delta$ over the draw of an i.i.d. $n$-size sample set,
		the prediction error $R\left( h \right)$ of the returned binary  quantum AdaBoost classifier satisfies
		\begin{equation}
  R\left( h \right)\le e^{-2\sum_{t=1}^{T}(\frac{1}{2}-\epsilon_t)^2} + 12\sqrt{\frac{K\log 7K}{n}}+4\sqrt{\frac{K}{n}}+\sqrt{\frac{\log{\frac{1}{\delta}} }{2n}}\label{th01b}.
		\end{equation}
\end{theorem}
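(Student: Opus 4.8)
The plan is to split the prediction error as $R(h)\le \widehat{R}_S(h)+\sup_{g}\big(R(g)-\widehat{R}_S(g)\big)$, the supremum being over the class of classifiers that Algorithm~\ref{algorithm2} can output, and to bound the two pieces separately: the empirical risk of the returned classifier yields the exponential term, while the uniform deviation yields the three $1/\sqrt{n}$ terms. For the empirical risk I would follow the classical binary AdaBoost analysis: unrolling the weight update in Algorithm~\ref{algorithm2} gives $\mathcal{D}_{T+1}(i)=\tfrac{1}{n}\exp(-y_i f(\boldsymbol{x}_i))/\prod_{t=1}^{T}Z_t$ with $f=\sum_{t=1}^{T}\alpha_t h_{\boldsymbol{\theta}_t}$, so that, using the elementary bound $\mathbb{I}_{yf\le 0}\le e^{-yf}$ together with $\sum_i\mathcal{D}_{T+1}(i)=1$, one obtains $\widehat{R}_S(h)\le\prod_{t=1}^{T}Z_t=\prod_{t=1}^{T}2\sqrt{\epsilon_t(1-\epsilon_t)}$; the choice $\alpha_t=\tfrac12\log\frac{1-\epsilon_t}{\epsilon_t}$ is precisely the minimizer of $Z_t$. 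Since $2\sqrt{\epsilon_t(1-\epsilon_t)}=\sqrt{1-4(\tfrac12-\epsilon_t)^2}\le\exp(-2(\tfrac12-\epsilon_t)^2)$, this gives $\widehat{R}_S(h)\le\exp(-2\sum_{t=1}^{T}(\tfrac12-\epsilon_t)^2)$, the first term of Eq.~(\ref{th01b}).

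For the uniform deviation I would first observe, via Eq.~(\ref{equality}), that $h=\mathrm{sgn}(\bar f)$ with $\bar f=f/\|\boldsymbol{\alpha}\|_1$ a convex combination of base classifiers from $\mathcal{H}=\{h_{\boldsymbol{\theta}}\}$, so $h\in\mathrm{sgn}(\mathrm{conv}\,\mathcal{H})$. A bounded-differences (McDiarmid) argument applied to the $0$–$1$ loss---whose per-sample fluctuation is $1/n$---produces the $\sqrt{\log(1/\delta)/2n}$ term and reduces the problem, after symmetrization, to controlling (twice) the empirical Rademacher complexity of the relevant loss class. The crucial structural point is that this complexity is governed by that of $\mathcal{H}$ alone and is independent of the number of rounds $T$: the empirical Rademacher complexity of a convex hull coincides with that of the base class, and the discontinuity of $\mathrm{sgn}$ is absorbed by passing through a Lipschitz ramp (margin) surrogate.

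It then remains to bound $\widehat{\mathfrak{R}}_n(\mathcal{H})$. Here I would use that each $h_{\boldsymbol{\theta}}(\boldsymbol{x})=\mathrm{sgn}\,\mathrm{Tr}[ZU(\boldsymbol{\theta})\rho(\boldsymbol{x})U^\dagger(\boldsymbol{\theta})]$ depends on only $K$ gate angles and that the real-valued map $\boldsymbol{\theta}\mapsto\mathrm{Tr}[ZU(\boldsymbol{\theta})\rho(\boldsymbol{x})U^\dagger(\boldsymbol{\theta})]$ is Lipschitz in $\boldsymbol{\theta}$---each factor $e^{-i\theta_k P_k/2}$ has operator-norm derivative $\tfrac12$, while $\|Z\|=1$ and $\mathrm{Tr}\,\rho=1$. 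Consequently the $K$-dimensional parameter domain admits, at resolution $\epsilon$, a cover of cardinality $(cK/\epsilon)^K$ inducing a uniform $\epsilon$-cover of $\mathcal{H}$, and a chaining (Dudley) estimate---or a Massart-type bound over such a cover---then gives $\widehat{\mathfrak{R}}_n(\mathcal{H})\lesssim\sqrt{K\log(7K)/n}+\sqrt{K/n}$; carrying the symmetrization factor $2$ through reproduces the $12\sqrt{K\log 7K/n}$ and $4\sqrt{K/n}$ terms, and summing the three contributions establishes Eq.~(\ref{th01b}).

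The step I expect to be hardest is the uniform-deviation bound, for two reasons. First, one must obtain a guarantee that does not deteriorate as $T$ grows, which forces the argument through the convex hull of $\mathcal{H}$ and requires careful handling of the non-Lipschitz sign function, in particular choosing the margin/ramp scale so that it does not reintroduce a dependence on $T$ or on $\|\boldsymbol{\alpha}\|_1$. Second, one needs the $O(K)$ metric-entropy estimate for the PQC-based class, which means making the Lipschitz constant of the circuit output in its parameters---and hence the covering number---explicit enough to extract the stated numerical constants.
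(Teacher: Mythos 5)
Your proposal follows essentially the same route as the paper: the same decomposition into training error plus uniform deviation, the classical AdaBoost bound $\widehat{R}_S(h)\le\prod_{t=1}^{T}2\sqrt{\epsilon_t(1-\epsilon_t)}\le e^{-2\sum_{t=1}^{T}(1/2-\epsilon_t)^2}$ (which the paper simply cites as Theorem~7.2 of Mohri et al.\ while you reprove it), the McDiarmid/Rademacher generalization bound combined with $\mathfrak{R}_{n}(\mathrm{conv}(\mathcal{H}))=\mathfrak{R}_{n}(\mathcal{H})$, and a covering-number/Dudley-chaining estimate of $\mathfrak{R}_{n}(\mathcal{H})$ built on the same per-gate Lipschitz constant $1/2$, which is exactly how the constants $12\sqrt{K\log 7K/n}$ and $4\sqrt{K/n}$ arise. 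The only difference is cosmetic---how the sign nonlinearity is discharged: you pass through a Lipschitz ramp surrogate, whereas the paper applies Talagrand's contraction treating the zero--one loss as $1$-Lipschitz and then drops the sign inside the Rademacher average---and both moves play the same role in the argument.
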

	
 Recall that the prediction error $R(h)$ can be decomposed into the training error $\widehat{R}_S\left( h\right)$ and generalization error $\texttt{gen}\left( h\right)$. The first term in the RHS of Eq.~(\ref{th01b}) gives the upper bound of the training error $\widehat{R}_S\left( h\right)$. It decreases exponentially as the number of boosting $T$ increases. This is attributed to the good property of AdaBoost.

 The second and third terms in the RHS of Eq.~(\ref{th01b}) can be viewed as  the upper bound of the generalization error $\texttt{gen}\left( h\right)$. Here, in contrast to the classical case, our success in bounding the generalization error of quantum AdaBoost owes to the good generalization property of quantum machine learning~\cite{caro2022gen}. We find that as the number of  trainable gates $K$ increases, on the one hand, the second and third terms become larger implying the increase of the generalization error; on the other hand, the expressibility of PQC  will also be larger implying that each weighted error $\epsilon_t$ can be reduced leading to the reduction of  the first term in the RHS of Eq.~(\ref{th01b}). On the contrary, as $K$ decreases, the second and third terms decrease, while the decrease in the expressibility of PQC cause the increase of $\epsilon_t$, which in turn results in an increase of  the first term. Therefore, there is a tradeoff between reducing the training error and restricting the generalization error through controlling the number of trainable gates in the PQC, which is  Occam’s Razor principle.

The last term in the RHS of Eq.~(\ref{th01b})  is due to McDiarmid's concentration inequality~\cite{mohri2018foundations}, which gives probability bounds for a random variable to deviate from its mean.

It is important to take into account of the effect of noises inherent in NISQ circuits.  From Theorem~\ref{th01}, we can see that for noisy PQCs,  the learning risk bound  Eq.~(\ref{th01b}) is still valid as long as a weak quantum classifier, which is superior to the random guess, can be trained at each boosting round.  However, it is worth pointing out that this assumption will become hard to be met as the noise rate gets larger.

	\subsection{Numerical Verification}
In this subsection we verify the learning risk bound given in Theorem~\ref{th01} by focusing on a quantum phase classification problem. The numerical simulations are conducted using the Python libraries PennyLane \cite{pennylane2018} and JAX \cite{jax2018github}. Specifically, we utilize the PennyLane's built-in `default.qubit' device to simulate the circuit. JAX is employed to calculate the loss function and its gradients with respect to the circuit parameters.

We consider the axial next-nearest-neighbor Ising (ANNNI) model consisting of $N$ spins \cite{Monaco2023,cea2024,Beccaria2007}
\begin{equation}\label{H}
\begin{aligned}
  H=-J\left( \sum_{i=1}^{N-1} X_{i} X_{i+1}-\kappa\sum^{N-2}_{i=1}    X_{i} X_{i+2}+h\sum^N_{i=1}  Z_i\right),
\end{aligned}
\end{equation}
where $X_i=\begin{pmatrix}
	0 & 1\\
	1 & 0
	\end{pmatrix}$ and $Z_i=\begin{pmatrix}
	1 & 0\\
	0 & -1
	\end{pmatrix}$ are Pauli matrices acting on the $i$th spin, and we assume open boundary conditions. The constant $J$ describes the energy scale of the Hamiltonian and is set to be $J=1$ without loss of generality. The adimensional ratios $\kappa$ and $h$ account for the next-nearest-neighbor interaction and the transverse magnetic field, respectively. The difference of sign between the nearest and next-nearest interactions is responsible for the magnetic frustration, and $\kappa$ is referred to as the frustration parameter.

This is a paradigm to study the competition between magnetic ordering, frustration, and thermal disordering effects. Owing to the competition, the phase diagram of the ground states of $H$ is rich. For the sake of simplicity, as in Ref.~\cite{Monaco2023}, we only consider the three phases that have been confirmed, namely,  paramagnetic, ferromagnetic, and antiphase. When the frustration is low ($\kappa<0.5$), the critical Ising-like line $h_I(\kappa) \approx \frac{1-\kappa}{\kappa} \ \left(1-\sqrt{\frac{1-3 \kappa+4 \kappa^{2}}{1-\kappa}}\right) $ separates the paramagnetic and ferromagnetic phases. For high frustration ($\kappa>0.5$), of a commensurate-incommensurate type, the critical line $h_{C}(\kappa) \approx 1.05 \sqrt{(\kappa-0.5)(\kappa-0.1)}$ separates the paramagnetic and antiphase. More details can be found in Refs.~\cite{Monaco2023,cea2024}. Since the paramagnetic phase is disordered, while both the ferromagnetic and antiphase are ordered, we consider binary classification. For ground states of $H$, we label them as $+1$ if they belong to the paramagnetic phase; otherwise, we label $-1$, namely, they either have a ferromagnetic or an antiphase.
	
Consider a $6$-qubit ANNNI model. We draw $\left( \kappa,h \right) \in (0,1) \times (0,2)$ according to the uniform distribution and then compute the ground state of the corresponding Hamiltonian $H(\kappa, h)$ of Eq.~(\ref{H}). The ground state and its label of the quantum phase constitute a sample point. Repeat the above procedure $n$ times, and we can generate a data set $S={\left\lbrace \left( \rho\left(\kappa_i, h_i\right) , y_i\right) \right\rbrace} ^n_{i=1}$. Here, without loss of generality, we assume that the sample points in $S$ are different from each other.

In this paper, we employ quantum convolutional neural networks (QCNNs) as our PQCs.  QCNNs have been widely used as quantum classifiers \cite{wei2022quantum, chen2022quantum, hur2022quantum}. It has been proved that barren plateaus can be avoided by using QCNNs~\cite{pesah2021absence}, and thus the trainability can be effectively improved. In the following, to determine the quantum classifier $h_{\boldsymbol{\theta}_t}$ at the $t$th boosting round, we first initialize the variational parameters of the employed QCNN according to the standard normal distribution, and then update them by the Adam \cite{kingma2014adam}.  The learning rate is set to be $0.05$, the gradient of the cross-entropy loss (Eq.~(\ref{loss1})) is calculated via the parameter-shift-rule~\cite{Romero2018,Mitarai2018,schuld2019evaluating}, and the maximum number of iterations is set to be $120$. Since  after each iteration  a quantum classifier is yielded, we adopt the  classifier having the minimum training error $\epsilon_t$ among the $120$ iterations as the weak quantum base classifier  $h_{\boldsymbol{\theta}_t}$ at the  $t$th boosting round.

\begin{figure}[h]
	\centering
	\includegraphics[width=\linewidth]{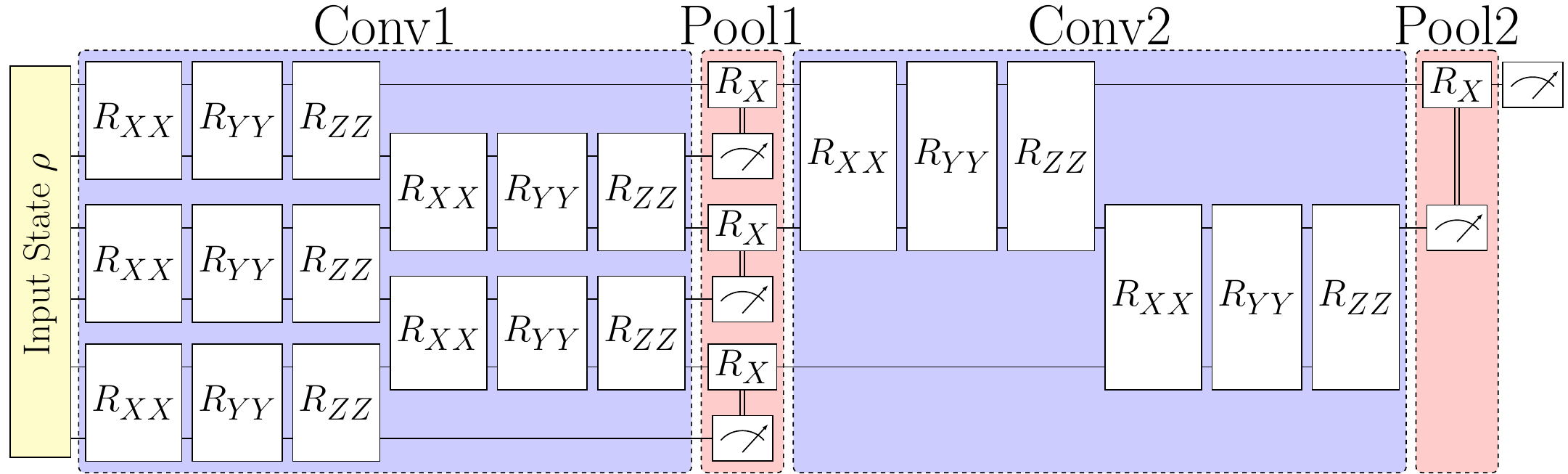}
	\caption{Architecture of  QCNN for the $6$-qubit ANNNI quantum phase classification: ground state preparation (yellow),  then two blocks of  convolutional (Conv) and pooling (Pool) layers and a measurement on the first qubit at the end. The pooling layers not only reduce the  size of the subsequent system, but also provide non-linearity for the whole circuits.}
\label{fig:6_qubit_QCNN_binary}
\end{figure}

The QCNN for the $6$-qubit ANNNI quantum phase classification is illustrated in Fig.~\ref{fig:6_qubit_QCNN_binary}. The yellow shaded represents the ground state preparation. The convolutional layers involve variational gates acting on pairs of qubits. In the pooling layers, one of the paired qubits is measured, and conditioned on the measurement result, a rotation is applied to the remaining qubit.

\begin{figure}[h]
	\centering
	\includegraphics[width=0.8\linewidth]{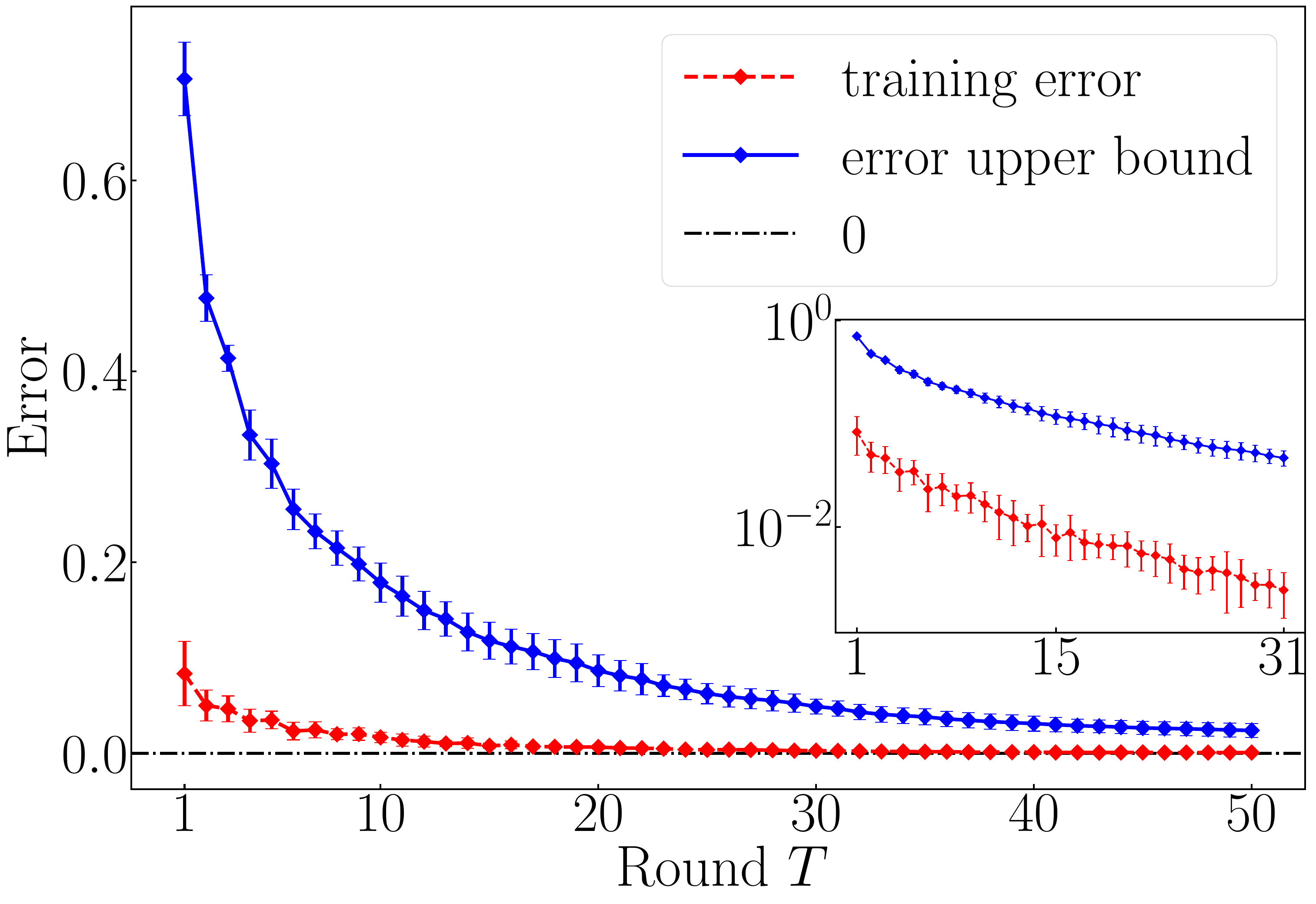}
	\caption{Training error and error upper bound versus the boosting round $T$. The blue line and red dashed illustrate the average of the error upper bound and the training error over 10 runs, respectively. The error bars show the corresponding standard deviations. Here, the inset depicts the exponential decrease of the training error in the first 31 rounds of boosting on a semi-log plot.}
\label{fig:train_bound_mean_std}
\end{figure}

We first verify that the first term $e^{{-2\sum_{t=1}^{T}(\frac{1}{2}-\epsilon_t)^2}}$ in the RHS of Eq.~(\ref{th01b})  gives an upper bound for the training error  $\widehat{R}_S\left( h\right)$. Consider a training data set $S={\left\lbrace \left( \rho\left(\kappa_i, h_i\right) , y_i\right) \right\rbrace} ^n_{i=1}$ consisting of $n=2000$ sample points. Set the maximum number of boosting rounds $T=50$, and repeat the demonstration for 10 times. The numerical results are demonstrated in Fig.~\ref{fig:train_bound_mean_std}. The blue line and red dashed depict the average of $e^{{-2\sum_{t=1}^{T}(\frac{1}{2}-\epsilon_t)^2}}$ and the training error $\widehat{R}_S\left( h\right)$ over 10 realizations, respectively. The error bars illustrate the corresponding standard deviations. We can see that as  $T$ grows, the difference between the error upper bound and the training error decreases.

\begin{figure}[h]
	\centering
	\includegraphics[width=0.8\linewidth]{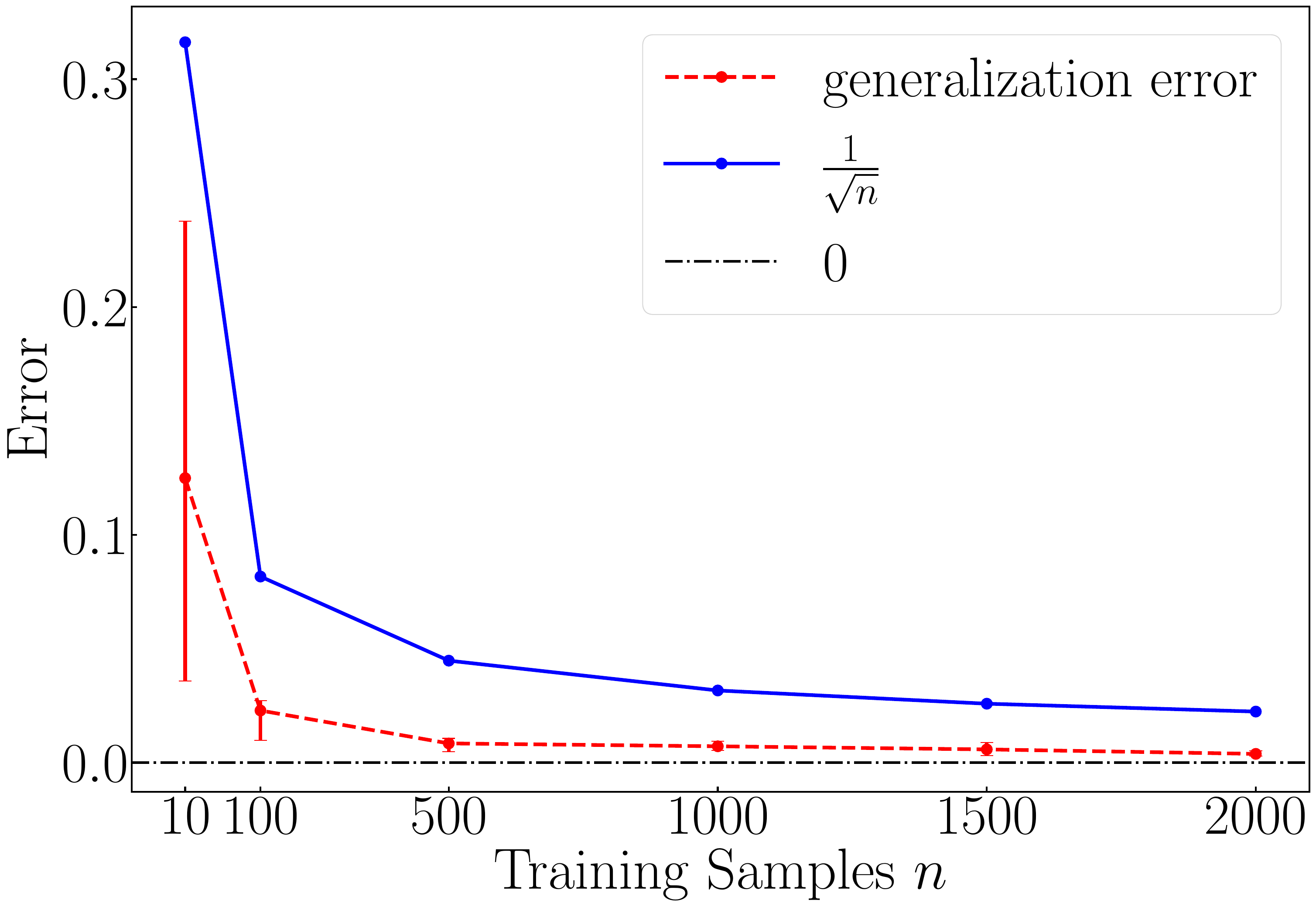}
	\caption{Generalization error versus different sizes of training data set. The red dashed line shows the average generalization error across 10  runs, and the error bars illustrate the ranges of the errors in the 10 realizations. The blue baseline represents $\frac{1}{\sqrt{n}}$, which corresponds to the asymptotic scaling of the generalization bound.}
\label{fig:gen_bound}
\end{figure}

We now validate that the second and third terms in the RHS of Eq.~(\ref{th01b})  give an upper bound of  the generalization error $\texttt{gen}\left( h\right)$. We consider different sizes $n$ of the training data set $S={\left\lbrace \left( \rho\left(\kappa_i, h_i\right) , y_i\right) \right\rbrace} ^n_{i=1}$. For each fixed size $n$, the training stops if a zero training error is obtained, which occurs when $n$ is small; otherwise, the maximum number of boosting rounds $T$ is set to be $T=50$. The size of the test data is 5000, and all the test samples are different from the training samples. The  demonstration is conducted 10 times and we present the numerical results in Fig.~\ref{fig:gen_bound}. The red dashed line depicts the average generalization error across the 10 runs, with the error bars indicating the ranges of errors in the 10 realizations. Moreover, the average generalization error is upper bounded by $\frac{1}{\sqrt{n}}$, aligning with the asymptotic scaling of the theoretical bound.

We have verified the learning risk bound for quantum AdaBoost on binary classification given by Theorem~\ref{th01}. We would like to point out that the risk bound should be further sharpened in the future, particularly, when $n/K$ is relatively small.


\section{\label{extension} Advantages of Quantum AdaBoost on 4-Class Classification}

\begin{figure*}[t]
	\flushleft
	\includegraphics[width=\textwidth]{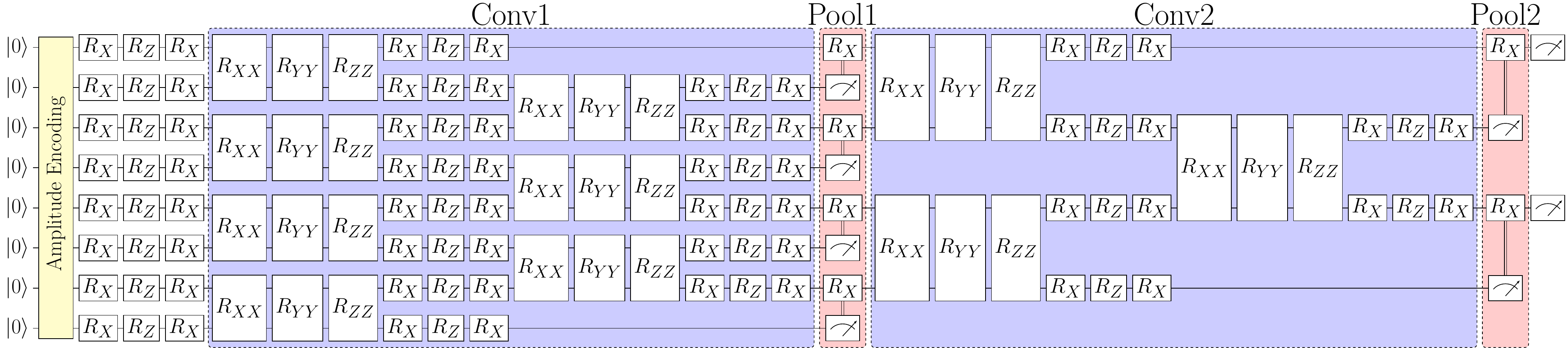}
	\caption{Architecture of QCNN for 4-class classification: amplitude encoding, then a set of single-qubit rotational gates followed by two blocks of convolutional (Conv) and pooling (Pool) layers, and a measurement on two qubits at the end.}
	\label{fig:QCNN}
\end{figure*}

In this section, we numerically demonstrate the advantages of quantum AdaBoost by  classifying the handwritten digits $\left\lbrace 0,1,2,3\right\rbrace $ in MNIST datasets \cite{lecun1998gradient}. The numerical simulations are conducted using the Python libraries PennyLane \cite{pennylane2018} and JAX \cite{jax2018github}. Specifically, we utilize
the `default.qubit' device in PennyLane to simulate noiseless circuits and the `default.mixed' device to simulate noisy circuits.

\subsection{Advantage of AdaBoost}\label{IVA}
In this subsection, we demonstrate the advantage of AdaBoost by comparing it with an unboosted primitive classifier.
\begin{figure}[b]
	\centering
	\includegraphics[width=0.8\linewidth]{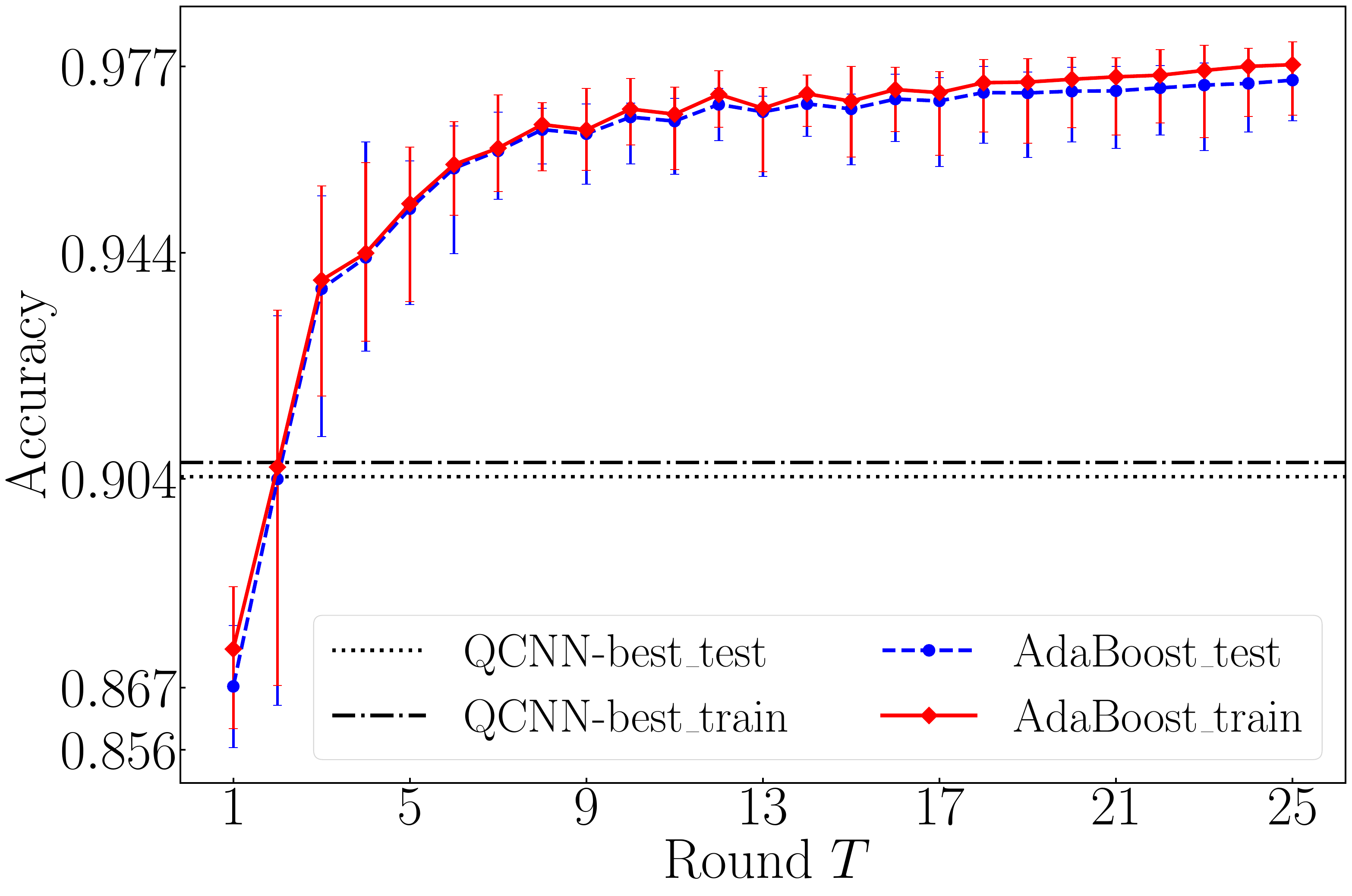}
	\caption{Accuracy  of  quantum AdaBoost and QCNN-best for 4-class classification. The blue dashed (red) line depicts the average test (training)  accuracy over 10 realizations of quantum AdaBoost.  The bars illustrate the ranges of the accuracies in the 10 runs. The black dotted (dash-dotted) depicts the test (training) accuracy of QCNN-best. It is clear that quantum AdaBoost outperforms QCNN-best after only 3 rounds of boosting.}
	\label{fig:noiseless}
\end{figure}

We randomly draw  $8000$ sample points for training and test, respectively, and the test samples are different from the training ones.
For each sampled image in MNIST, we first downsample it from $28 \times 28$ to $16 \times 16$, and then embed it into a quantum state of an $8$-qubit system using amplitude encoding. There are many other strategies for encoding classical data, e.g., angel encoding. It was proved that angle encoding will lead to the average encoded state concentrating on the maximally mixed state at an exponential speed on the depth of the encoding circuit \cite{Li2022Concentration}. Thus, to avoid the reduction of the encoded quantum state, we employ the amplitude encoding strategy. We would like to point out that how to realize the amplitude encoding is another question worth studying \cite{Nakaji2022,Mitsuda2024}. The QCNN for this 4-class classification is illustrated in Fig.~\ref{fig:QCNN}, which contains a total number of 120 variational parameters.

The demonstration is performed for 10 times. For quantum AdaBoost, the maximum number of boosting rounds $T$ is set to be 25.
To demonstrate the advantage of AdaBoost, we also consider a primitive classifier without any boosting which is referred to as QCNN-best. We train the QCNN-best by utilizing the cross-entropy loss function Eq.~(\ref{loss0}). It is optimized for $3000$ iterations. This is the same number as that in quantum AdaBoost, which has $T=25$ boosting rounds with each round having 120 iterations. The term {\it best} in QCNN-best means that we return the classifier having the best test performance among a total number of  $30000~ (10\times 3000)$ classifiers.

We demonstrate the  accuracy (one minus error) of quantum AdaBoost and QCNN-best  in Fig.~\ref{fig:noiseless}. The blue dashed (red) line illustrates the average test (training)  accuracy over 10 realizations of quantum AdaBoost, while the bars show the ranges of the accuracies in the 10  runs. The black dotted (dash-dotted) depicts the test (training) accuracy of QCNN-best. It is clear that  quantum AdaBoost outperforms QCNN-best  after only 3 rounds of boosting, and its test performance can always exceed $0.953$ (among 10 runs) after $T=7$ rounds of boosting. Moreover, the differences between the test and training accuracies of quantum AdaBoost are always below $0.01175$. This clearly demonstrates the good generalization property of  quantum AdaBoost. Thus, to improve the prediction accuracy, boosting is much better than simply increasing the number of training iterations.
	
 \subsection{Advantage over Other Ensemble Methods}
\begin{table*}[ht]
	\centering
	\caption{Comparison of  four different ensemble methods. The first row lists the training accuracy (acc.), and the second row lists the test accuracy. The values in the table represent the average values $\pm$ standard deviations.}
	\hspace{13cm}
	
	\begin{tabular}{c|c|c|c|c}
		\hline
		&QCNN+AdaBoost                             & QCNN+Bagging    & CNN+AdaBoost & CNN+Bagging \\
		\hline
		Training Acc. &0.977$\pm$0.004             & 0.891$\pm$0.005  & 0.984$\pm$0.009 & 0.981$\pm$0.001 \\
		\hline
		Test Acc.& \textbf{0.975$\pm$0.003}  &  0.884$\pm$0.005 & 0.965$\pm$0.007 & 0.970$\pm$0.001 \\
		\hline
	\end{tabular}
	\label{tab:comparison}
\end{table*}

In this subsection, we compare quantum AdaBoost (QCNN+AdaBoost) with three other ensemble methods. The first one is
quantum Bagging (QCNN+Bagging), the second is classical neural networks (CNN) with AdaBoost, referred to as CNN+AdaBoost, and the third one is CNN powered by Bagging, abbreviated as CNN+Bagging. The training data set, test data set, and the QCNN are the same as those in Section~\ref{IVA}.

The Bagging is one of the most popular ensemble approaches in classification problems~\cite{breiman1996bagging}. The prediction of Bagging is given through majority vote among the base classifiers. For quantum Bagging, we independently train $T$ different quantum base classifiers by utilizing the cross-entropy loss function Eq.~(\ref{loss0}) and the QCNN illustrated in Fig.~\ref{fig:QCNN}.  The difference between these quantum base classifiers lies in their training data sets, each of which is  obtained by resampling the original training data set for $8000$ times. Besides,  the initialization of QCNN parameters is different for different quantum base classifiers. Here, the number of base classifiers is set to be $T=25$, which is the number of boosting rounds of AdaBoost in Section~\ref{IVA}.

The CNN takes the form of $$f(x) = \sigma\Big{(}W_{2} \sigma(W_{1} x + b_1) + b_2\Big{)},$$ where $\sigma(\cdot)$ denotes the softmax function and $W_1 \in \mathbb{R}^{3 \times 256}, W_2 \in \mathbb{R}^{4 \times 3}, b_1 \in \mathbb{R}^{3}, b_2 \in \mathbb{R}^{4}$.
Note that CNN has 787 parameters, more than 6 times as many as QCNN (120 parameters).

For the four ensemble methods to be compared, we utilize the same numerical settings. Specifically, the optimizer is Adam, the learning rate is set to be 0.05, and all the parameters are initialized according to the standard normal distribution. We adopt the classifier having the smallest training error among the 120  iterations as the base classifier.  We perform each ensemble method for 10 times.

The numerical results are shown in Table~\ref{tab:comparison}. The values in the table represent the average values  and their standard deviations. We find that  both  CNN+AdaBoost and  CNN+Bagging are superior to QCNN+AdaBoost in terms of training accuracy. This is mainly because CNN has more variational parameters than  QCNN, so the expressibility of CNN is better than QCNN.
However, owing to the quantum advantage in generalization, quantum AdaBoost (QCNN+AdaBoost) has the best  test accuracy among the four ensemble methods.


 \subsection{Advantage in Error Mitigation}

In this subsection, we show that quantum AdaBoost can effectively mitigate the effect of noises.

\begin{figure}[h]
	\centering
	\includegraphics[width=\linewidth]{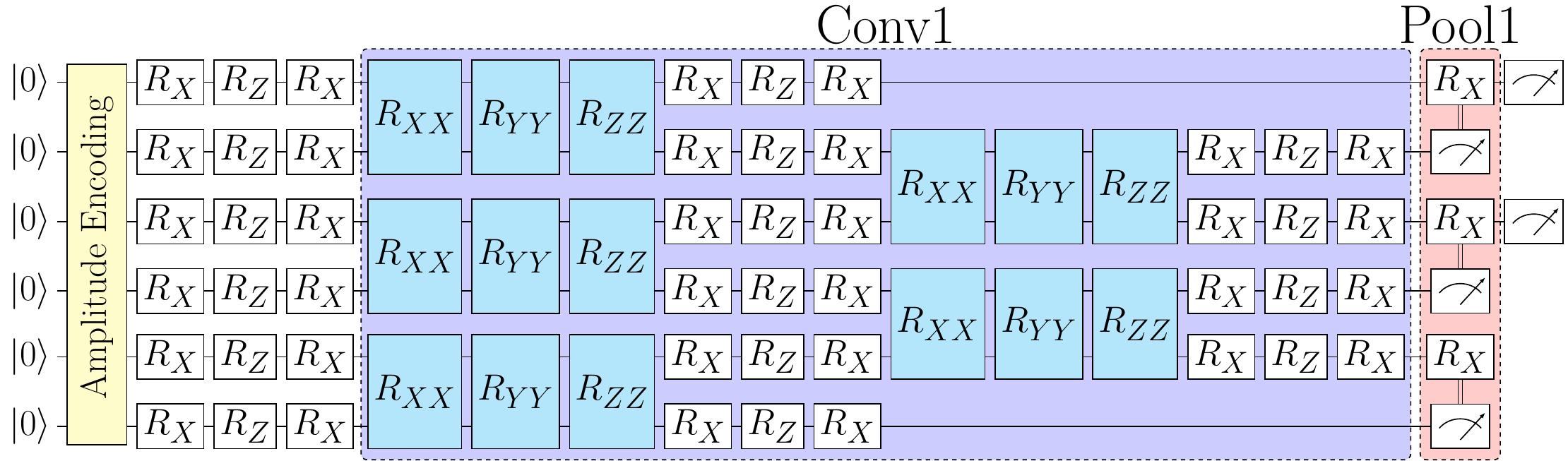}
	\caption{Architecture of  noisy QCNN for 4-class classification: amplitude encoding (yellow), then a set of single-qubit rotational gates followed by a convolutional (Conv) layer and a pooling (Pool) layer, and a measurement on two qubits at the end. The two-qubit gates (blue) are noisy.}
	\label{fig:6_qubit_QCNN1}
\end{figure}

We also consider the classification of the handwritten digits $\left\lbrace 0,1,2,3\right\rbrace $ in MNIST datasets. The training and test data sets are randomly drawn from the MNIST, and the test samples are different from the training samples. Both the training and test data sets are 1000 in size. For each sampled image, we first downsample it from $28 \times 28$ to $8 \times 8$ and then use the amplitude encoding to embed it into a quantum state of a $6$-qubit system. The  QCNN is illustrated in Fig.~\ref{fig:6_qubit_QCNN1}. In practice, implementing single-qubit gates can achieve a high level of fidelity, while the fidelity of implementing two-qubit gates is still relatively lower. Thus, we assume that the two-qubit gates (colored blue)  in Fig.~\ref{fig:6_qubit_QCNN1} are noisy.

To simulate the effect of noises in QCNN,  we consider three typical types of quantum noise channels:
\begin{itemize}
  \item Depolarizing channel: $ \mathcal{E}_1(\rho)=\frac{p}{2}I+(1-p) \rho$, with $p=0.03$;
  \item Amplitude damping: $\mathcal{E}_{2}(\rho)=E_{0}^A \rho E_{0}^{A\dagger}+E_{1}^A \rho E_{1}^{A\dagger},$ where $\begin{array}{l}
  E_{0}^A=\left[\begin{array}{cc}
  1 & 0 \\
  0 & \sqrt{1-\gamma}
  \end{array}\right] ,
  E_{1}^A=\left[\begin{array}{cc}
  0 & \sqrt{\gamma} \\
  0 & 0
  \end{array}\right],
  \end{array}$ and $\gamma=0.03$;
  \item Phase damping: $  \mathcal{E}_{3}(\rho)=E_{0}^P \rho E_{0}^{P\dagger}+E_{1}^P \rho E_{1}^{P\dagger},$ where $\begin{array}{l}
  E_{0}^P=\left[\begin{array}{cc}
  1 & 0 \\
  0 & \sqrt{1-\lambda}
  \end{array}\right],
  E_{1}^P=\left[\begin{array}{cc}
  0 & 0 \\
  0 & \sqrt{\lambda}
  \end{array}\right]
  \end{array}$, and $\lambda=0.03$.
\end{itemize}
A noisy  two-qubit gate  in Fig.~\ref{fig:6_qubit_QCNN1} represents that after the ideal gate we apply a noise channel on each of the two qubits.

\begin{figure*}[htpb]
	\centering
	\includegraphics[width=0.8\textwidth]{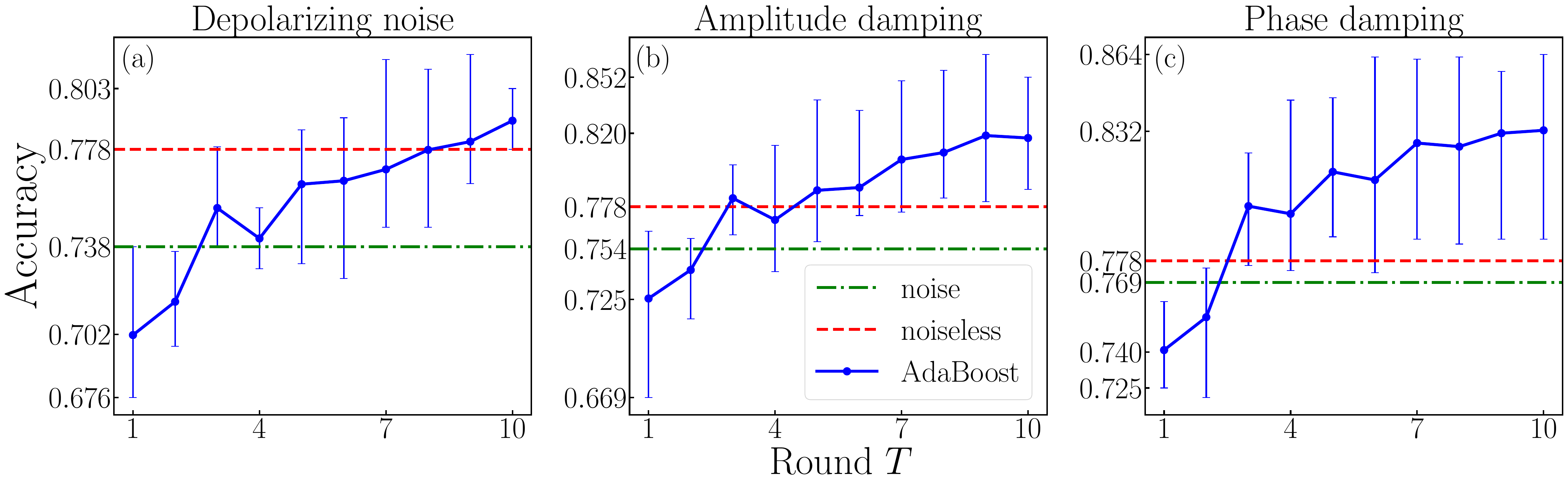}
	\caption{Comparison of test accuracy. (a) QCNN with depolarizing. (b) QCNN with amplitude damping. (c) QCNN with phase damping.  The blue line depicts the average test accuracy of  quantum AdaBoost with noisy QCNN. The bars show the ranges of the test accuracies in 10 runs. The green dash-dotted (red dashed) line describes the best test accuracy of the primitive method with noisy (noiseless) QCNN. Quantum AdaBoost outperforms the noiseless primitive classifier  after only a few rounds of boosting.}
	\label{fig:noise}
\end{figure*}

We compare the test accuracy of  quantum AdaBoost, the primitive classifier with noisy QCNN (Fig.~\ref{fig:6_qubit_QCNN1}), and the primitive classifier with noiseless QCNN where all gates in Fig.~\ref{fig:6_qubit_QCNN1} are ideal. We perform 10 runs  for each method. In quantum AdaBoost, the maximum number of boosting rounds $T$ is set to be 10, with each round having 120 iterations.  For the primitive methods, we conduct 1200 iterations, which is as many as the total iterations in quantum AdaBoost. The other  numerical settings are the same as before.

We illustrate the test accuracy of the three methods in Fig.~\ref{fig:noise}. For the two primitive methods, we return the classifier having the best test accuracy among the 12000 classifiers. The green dash-dotted (red dashed) line describes the best test accuracy of the primitive classifier with noisy (noiseless) QCNN. The blue line depicts the average test accuracy of  quantum AdaBoost with noisy QCNN. The  bars demonstrate the ranges of the test accuracies in 10 runs.  We can see that for the three typical types of noise, quantum AdaBoost outperforms the noiseless unboosted primitive classifier after only a few rounds of boosting. The reason is twofold. First, since the QCNN is shallow, the degradation of quantum classifiers caused by noises are relatively small. This is evidenced by the small differences in test accuracy between the noisy and noiseless primitive classifiers in Fig.~\ref{fig:noise}. Second, although noises do degrade weak classifiers, as long as they still outperform the random guess, according to Theorem~\ref{th01}, the weak classifiers can be boosted to obtain a strong classifier whose performance improves as the number of boosting rounds and the sample size increase.  Thus, we can see that quantum AdaBoost can help mitigate the impact of noises in PQCs. This good property of quantum AdaBoost is particularly useful in the NISQ era.

\section{\label{conclusion}Conclusion}
Quantum machine learning has good generalization property while its trainability is generally poor. Ensemble methods are particularly appropriate to enhance the trainability of quantum machine learning, and in turn improve the prediction accuracy. In this paper we theoretically establish and numerically verify a prediction guarantee for binary quantum AdaBoost. We also demonstrate that for a 4-class classification problem, quantum AdaBoost not only can achieve a higher level of prediction accuracy compared to the unboosted primitive classifier and three other ensemble approaches, but also help mitigate the impact of noises.  For future work, it is interesting to incorporate ensemble methods to solve other practical tasks.

\section*{Acknowledgments}
B.~Q. acknowledges the support of the National Natural Science Foundation of China (No.~61773370). D.~D. acknowledges the support of the Australian Research Council’s Future Fellowship funding scheme under Project No.~FT220100656.

\appendix
\section*{Appendices}
In the Appendices, we present the lemmas supporting our main theorem in Appendix~\ref{lemmas}, and provide the formal proof of Theorem~\ref{th01} and Lemma~\ref{radh} in Appendix~\ref{proofth01} and \ref{proofthradh}, respectively.

\section{\label{lemmas}Technical Lemmas}

We first introduce a lemma that can be used to bound the training error of binary quantum AdaBoost.
\begin{lemma}\label{th72}
	{\em (Theorem 7.2,  \cite{mohri2018foundations})}
	The training error of the binary classifier returned by AdaBoost verifies:
	\begin{equation}\label{key}
	\widehat{R}_{S}\left( h\right)\le e^{-2\sum_{t=1}^{T}{{\left( \frac{1}{2}-\epsilon_{t}\right)} }^2}.
	\end{equation}
\end{lemma}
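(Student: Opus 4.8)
The statement is the classical AdaBoost training-error bound (Theorem 7.2 of \cite{mohri2018foundations}) transcribed to the setting of Algorithm~\ref{algorithm2}, where the base classifiers $h_{\boldsymbol{\theta}_t}$ are the $\{-1,+1\}$-valued quantum classifiers of Eq.~\eqref{hypothesis2}. The proof uses nothing quantum: it only needs that each $h_{\boldsymbol{\theta}_t}$ takes values in $\{-1,+1\}$ and has weighted error $\epsilon_t<\tfrac12$, so the classical argument applies verbatim. The plan is the standard two-step scheme: (i) bound the empirical risk $\widehat{R}_S(h)$ of the combined classifier $h=\mathrm{sgn}(f)$, with $f:=\sum_{t=1}^{T}\alpha_t h_{\boldsymbol{\theta}_t}$, by the product $\prod_{t=1}^{T}Z_t$ of the normalization factors; (ii) bound each $Z_t$ by $e^{-2(1/2-\epsilon_t)^2}$ and multiply.

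For step (i), I would unroll the reweighting rule of Algorithm~\ref{algorithm2}. Iterating $\mathcal{D}_{t+1}(i)=\mathcal{D}_t(i)\,e^{-\alpha_t y_i h_{\boldsymbol{\theta}_t}(x_i)}/Z_t$ from $\mathcal{D}_1(i)=1/n$ yields the closed form $\mathcal{D}_{T+1}(i)=\frac{1}{n}\,e^{-y_i f(x_i)}\big/\prod_{t=1}^{T}Z_t$. Summing over $i$ and using that $\mathcal{D}_{T+1}$ is a probability distribution, $\sum_{i=1}^{n}\mathcal{D}_{T+1}(i)=1$, gives the key identity $\frac1n\sum_{i=1}^{n}e^{-y_i f(x_i)}=\prod_{t=1}^{T}Z_t$. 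Since $\mathbb{I}_{u\le 0}\le e^{-u}$, and since $\mathrm{sgn}$ preserves signs so that $y_i\neq h(x_i)\Leftrightarrow y_i f(x_i)\le 0$ (cf.\ Eq.~\eqref{equality}), we conclude $\widehat{R}_S(h)=\frac1n\sum_{i=1}^{n}\mathbb{I}_{y_i f(x_i)\le 0}\le \frac1n\sum_{i=1}^{n}e^{-y_i f(x_i)}=\prod_{t=1}^{T}Z_t$.

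For step (ii), I would compute $Z_t$ explicitly. Splitting the sum $Z_t=\sum_{i=1}^{n}\mathcal{D}_t(i)e^{-\alpha_t y_i h_{\boldsymbol{\theta}_t}(x_i)}$ into correctly and incorrectly classified points gives $Z_t=(1-\epsilon_t)e^{-\alpha_t}+\epsilon_t e^{\alpha_t}$; minimizing this convex function of $\alpha_t$ gives exactly $\alpha_t=\tfrac12\log\frac{1-\epsilon_t}{\epsilon_t}$, the choice made in Algorithm~\ref{algorithm2}, at which $Z_t=2\sqrt{\epsilon_t(1-\epsilon_t)}$. Writing $2\sqrt{\epsilon_t(1-\epsilon_t)}=\sqrt{1-4(\tfrac12-\epsilon_t)^2}$ and applying $\sqrt{1-x}\le e^{-x/2}$ with $x=4(\tfrac12-\epsilon_t)^2<1$ gives $Z_t\le e^{-2(1/2-\epsilon_t)^2}$. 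Combining with step (i), $\widehat{R}_S(h)\le\prod_{t=1}^{T}Z_t\le e^{-2\sum_{t=1}^{T}(1/2-\epsilon_t)^2}$, which is the claim.

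I do not expect a genuine obstacle, since the result is textbook; the two points that require a little care are the telescoping of the normalization constants (which turns the product $\prod_t Z_t$ into the exponential-loss average $\frac1n\sum_i e^{-y_i f(x_i)}$) and the elementary inequality $\sqrt{1-4(\tfrac12-\epsilon_t)^2}\le e^{-2(1/2-\epsilon_t)^2}$. One should also record the implicit nondegeneracy assumption that $h_{\boldsymbol{\theta}_t}(x_i)\in\{-1,+1\}$ for all $i,t$, i.e.\ no exact ties $\mathrm{Tr}[Z U(\boldsymbol{\theta}_t)\rho(\boldsymbol{x}_i)U^\dagger(\boldsymbol{\theta}_t)]=0$, which is consistent with the paper's standing assumptions; this bound is then used as the first (training-error) term in Eq.~\eqref{th01b} of Theorem~\ref{th01}.
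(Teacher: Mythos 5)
Your proof is correct and is exactly the standard argument behind the result the paper simply cites (Theorem 7.2 of \cite{mohri2018foundations}) without reproving: telescoping the distribution updates to get $\frac{1}{n}\sum_i e^{-y_i f(\boldsymbol{x}_i)}=\prod_t Z_t$, bounding the zero-one loss by the exponential loss via Eq.~(\ref{equality}), and then using $Z_t=2\sqrt{\epsilon_t(1-\epsilon_t)}\le e^{-2(1/2-\epsilon_t)^2}$. Your remark that the chosen $\alpha_t$ and the stated $Z_t$ presuppose $h_{\boldsymbol{\theta}_t}(\boldsymbol{x}_i)\in\{-1,+1\}$ (no exact ties in Eq.~(\ref{hypothesis2})) is a fair and correctly handled caveat.
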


Then we recall a well-known result in machine learning which provides an upper bound on the generalization error.

\begin{lemma}\label{th33}
	{\em (Theorem 3.3,  \cite{mohri2018foundations})}	Let $\mathcal{G}$ be a family of functions mapping from $\mathcal{Z}$ to $\left[ 0,1\right] $. Then, for any $\delta\, \textgreater \,0$, with probability at least $1-\delta$ over the draw of an i.i.d. sample set  $S=\left\lbrace z_{i}\right\rbrace _{i=1}^{n}$ according to an unknown distribution $\mathcal{D}$,  the following inequality holds for all $g\in \mathcal{G}$:
	\begin{equation*}\label{key}
		\mathop{\mathbb{E}}\limits_{z\sim \mathcal{D}}{\left[g\left( z\right) \right] }\le\frac{1}{n}\sum_{i=1}^{n}{g\left( z_{i}\right)}+2\mathfrak{R}_{n}\left( \mathcal{G}\right)+\sqrt{\frac{\log{\frac{1}{\delta}} }{2n}},
	\end{equation*}
 \onecolumngrid

	\noindent where $\mathfrak{R}_{n}\left( \mathcal{G}\right)$ denotes the expectation of the empirical Rademacher complexity $\widehat{\mathfrak{R}}_{S}\left( \mathcal{G}\right)$ of $\mathcal{G}$, defined as
	\begin{equation}\label{key}
	\mathfrak{R}_{n}\left( \mathcal{G}\right)=\mathop{\mathbb{E}}\limits_{S\sim {\mathcal{D}}^{n}}\left[ \widehat{\mathfrak{R}}_{S}\left( \mathcal{G}\right) \right] =\mathop{\mathbb{E}}\limits_{S\sim {\mathcal{D}}^{n}}{\mathop{\mathbb{E}}\limits_{\boldsymbol{\sigma}}{\left[ \sup_{g\in\mathcal{G} }\frac{1}{n}\sum_{i=1}^{n}{\sigma_{i}g\left(z_{i} \right)  }\right] ,}}
	\end{equation}
	where $\boldsymbol{\sigma}=\left( \sigma_{1},\cdots,\sigma_{n}\right)^{\top} $, with $\sigma_i$s  independent uniform random variables taking values in $\left\lbrace -1,+1\right\rbrace $.
\end{lemma}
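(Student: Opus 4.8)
The plan is to establish this uniform deviation bound by combining a concentration inequality with a symmetrization argument, the two pillars of Rademacher-based generalization theory. First I would introduce the single-sample supremum functional
\[
\Phi(S)=\sup_{g\in\mathcal{G}}\Big(\mathop{\mathbb{E}}\limits_{z\sim\mathcal{D}}[g(z)]-\frac{1}{n}\sum_{i=1}^{n}g(z_i)\Big),
\]
so that the asserted inequality, holding for all $g\in\mathcal{G}$, is exactly equivalent to controlling $\Phi(S)$ with high probability. Since every $g\in\mathcal{G}$ takes values in $[0,1]$, replacing a single coordinate $z_i$ of $S$ by any $z_i'$ changes the empirical average $\frac{1}{n}\sum_j g(z_j)$ by at most $1/n$; using the elementary bound $|\sup_g f_g-\sup_g h_g|\le\sup_g|f_g-h_g|$, the functional $\Phi$ itself therefore changes by at most $1/n$. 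Hence $\Phi$ satisfies the bounded-differences condition with constants $c_i=1/n$.

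The first key step is the concentration of $\Phi(S)$ about its mean. Applying McDiarmid's inequality with these constants gives $\sum_i c_i^2=1/n$, so that with probability at least $1-\delta$,
\[
\Phi(S)\le\mathop{\mathbb{E}}\limits_{S\sim\mathcal{D}^n}[\Phi(S)]+\sqrt{\frac{\log\frac{1}{\delta}}{2n}},
\]
which already supplies the final additive term of the target bound and reduces the problem to estimating $\mathbb{E}_S[\Phi(S)]$.

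The second key step is to bound $\mathbb{E}_S[\Phi(S)]$ by $2\mathfrak{R}_n(\mathcal{G})$ through symmetrization. Writing the population expectation via an independent ghost sample $S'=\{z_i'\}_{i=1}^n$ drawn from the same $\mathcal{D}$, I would use $\mathbb{E}_z[g(z)]=\mathbb{E}_{S'}[\frac{1}{n}\sum_i g(z_i')]$ and pull the outer expectation inside the supremum by Jensen's inequality, yielding
\[
\mathop{\mathbb{E}}\limits_{S}[\Phi(S)]\le\mathop{\mathbb{E}}\limits_{S,S'}\Big[\sup_{g\in\mathcal{G}}\frac{1}{n}\sum_{i=1}^{n}\big(g(z_i')-g(z_i)\big)\Big].
\]
Because the paired variables $(z_i,z_i')$ are exchangeable under the product distribution, inserting independent Rademacher signs $\sigma_i$ leaves the distribution of each summand invariant, so the right-hand side equals $\mathbb{E}_{\sigma,S,S'}[\sup_g\frac{1}{n}\sum_i\sigma_i(g(z_i')-g(z_i))]$. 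Splitting the supremum of a sum into a sum of suprema and using that $-\sigma_i$ has the same distribution as $\sigma_i$ then bounds this by $2\mathfrak{R}_n(\mathcal{G})$. Chaining the two steps gives the stated inequality.

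The main obstacle is the symmetrization step, specifically justifying that inserting the Rademacher variables preserves the expectation. This rests on the coordinatewise exchangeability of $z_i$ and $z_i'$ and must be argued before the supremum is taken; the subtlety is that the expectation over $\boldsymbol{\sigma}$ sits outside, not inside, the supremum, so one relies on the triangle inequality for suprema together with the symmetry of the Rademacher distribution rather than on any pointwise identity. By contrast, the concentration step is routine once the bounded-differences constants are correctly identified as $1/n$ (and not $1$), since the functions are bounded in $[0,1]$ and the supremum is Lipschitz in each coordinate.
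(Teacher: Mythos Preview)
Your proposal is correct and follows exactly the standard two-step argument (McDiarmid concentration on the supremum functional, then ghost-sample symmetrization) that appears in Mohri et al.'s textbook, which the paper simply cites without reproducing a proof. There is nothing to compare: the paper treats this lemma as a black-box citation, and your write-up is precisely the textbook derivation it refers to.
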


The following lemma relates
the empirical Rademacher complexity of a new set of composite functions of a hypothesis in $\mathcal{H}$ and a Lipschitz function
to the empirical Rademacher complexity of the hypothesis set $\mathcal{H}$.

\begin{lemma}\label{talagrand}
	\em {\em(Lemma 5.7,  \cite{mohri2018foundations})}	Let $\Phi_{1},\cdots,\Phi_{n}$ be $l$-Lipschitz functions from $\mathbb{R}$ to $\mathbb{R}$ and $\boldsymbol{\sigma}=\left( \sigma_{1},\cdots,\sigma_{n}\right)^{\top} $ with $\sigma_i$s  independent uniform random variables taking values in $\left\lbrace -1,+1\right\rbrace $. Then, for any hypothesis set $\mathcal{H}$ of real-valued functions, the following inequality holds:
	\begin{equation*}\label{key}
		\frac{1}{n}\mathop{\mathbb{E}}\limits_{\boldsymbol{\sigma}}{\left[ \sup_{h\in\mathcal{H} }\sum_{i=1}^{n}{\sigma_{i}\left( \Phi_{i}\circ h \right)\left(x_{i} \right)  }\right] }\le \frac{l}{n}\mathop{\mathbb{E}}\limits_{\boldsymbol{\sigma}}{\left[ \sup_{h\in\mathcal{H} }\sum_{i=1}^{n}{\sigma_{i}h\left(x_{i} \right)  }\right] }.
	\end{equation*}
\end{lemma}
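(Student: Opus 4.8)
The plan is to prove this contraction inequality by a \emph{one-term-at-a-time} peeling argument together with the defining symmetry of the Rademacher variables. The strategy is to show that replacing a single composite term $\Phi_i\circ h$ by the linear term $l\,h$ inside the supremum can only increase the expectation; iterating this replacement over all $n$ indices then transforms $\sum_i \sigma_i(\Phi_i\circ h)(x_i)$ into $l\sum_i \sigma_i h(x_i)$ and yields the claim.

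First I would isolate the single-term step. Fixing $\sigma_1,\dots,\sigma_{n-1}$ and writing $u(h)=\sum_{i=1}^{n-1}\sigma_i(\Phi_i\circ h)(x_i)$, it suffices to establish, for an arbitrary $l$-Lipschitz $\Phi$ and point $x$,
$$\mathbb{E}_{\sigma}\left[\sup_{h\in\mathcal{H}}\{u(h)+\sigma(\Phi\circ h)(x)\}\right]\le \mathbb{E}_{\sigma}\left[\sup_{h\in\mathcal{H}}\{u(h)+l\,\sigma\, h(x)\}\right],$$
where $\sigma$ is a single Rademacher variable. Taking the conditional expectation over $\sigma_n$ via this inequality, then the outer expectation over $\sigma_1,\dots,\sigma_{n-1}$, replaces the $n$th term; repeating the argument for each index (a finite induction) gives the full lemma.

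Then I would prove the single-term inequality by expanding the expectation over the two equally likely values $\sigma=\pm1$:
$$\mathbb{E}_\sigma[\,\cdot\,]=\tfrac12\sup_h\{u(h)+\Phi(h(x))\}+\tfrac12\sup_h\{u(h)-\Phi(h(x))\}.$$
For $\epsilon>0$ I choose $h_1,h_2\in\mathcal{H}$ that are $\epsilon$-optimal for the first and second suprema, and set $a_1=h_1(x)$, $a_2=h_2(x)$. The $l$-Lipschitz property gives $|\Phi(a_1)-\Phi(a_2)|\le l|a_1-a_2|$, so a case analysis on the sign of $a_1-a_2$ bounds $\Phi(a_1)-\Phi(a_2)$ by either $l(a_1-a_2)$ or $l(a_2-a_1)$. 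In the first case I regroup as $[u(h_1)+l a_1]+[u(h_2)-l a_2]$, and in the second as $[u(h_1)-l a_1]+[u(h_2)+l a_2]$; in each case every bracket is dominated by one of $\sup_h\{u(h)\pm l\,h(x)\}$. This shows the sum of the two original suprema is at most $\sup_h\{u(h)+l\,h(x)\}+\sup_h\{u(h)-l\,h(x)\}$ up to a $2\epsilon$ slack, which is twice the target right-hand side; dividing by $2$ and letting $\epsilon\to0$ finishes the step.

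The main obstacle I anticipate is the case analysis itself: the correct pairing of $h_1,h_2$ with the ``$+l\,h$'' and ``$-l\,h$'' target suprema flips with the sign of $a_1-a_2$, and one must verify that in both cases every regrouped bracket is genuinely controlled by a target supremum — this is precisely where the Lipschitz constant enters and where a sign error would silently break the argument. A secondary technical point is that the suprema need not be attained, which is why I work with $\epsilon$-optimal $h_1,h_2$ and pass to the limit. I would also note that the argument never requires any normalization such as $\Phi_i(0)=0$: the Rademacher symmetry absorbs additive constants, so no such assumption on the $\Phi_i$ is needed.
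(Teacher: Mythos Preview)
The paper does not supply its own proof of this lemma: it is quoted verbatim as Lemma~5.7 of Mohri, Rostamizadeh, and Talwalkar and used as a black box. Your proposal is correct and is exactly the standard argument given in that reference --- the one-coordinate peeling, the expansion over $\sigma=\pm 1$, the choice of $\epsilon$-optimal $h_1,h_2$, and the sign-of-$(a_1-a_2)$ case split are precisely the steps in Mohri et al.'s proof, so there is nothing to contrast.
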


In our work, the hypothesis set $\mathcal{H}=\left\lbrace h_{\boldsymbol{\theta}}:\boldsymbol{\theta}\in {\left[ 0,2\pi\right]}^K  \right\rbrace $, which is composed of PQC-based hypothesis $h_{\boldsymbol{\theta}}$ defined  in the form of Eq.~(\ref{hypothesis2}). We provide an upper bound of its Rademacher complexity in the following lemma (see Appendix~\ref{proofthradh} for a detailed proof).

\begin{lemma}\label{radh}
	For the quantum hypothesis set $\mathcal{H}=\left\lbrace h_{\boldsymbol{\theta}}:\boldsymbol{\theta}\in {\left[ 0,2\pi\right]}^K  \right\rbrace $ with $h_{\boldsymbol{\theta}}$ being defined as in Eq.~(\ref{hypothesis2}), its Rademacher complexity  can be upper bounded by a  function of the number of independent trainable quantum gates $K$ and the sample size $n$:
	\begin{equation}\label{radhb}
	\mathfrak{R}_{n}\left( \mathcal{H}\right)\le 6\sqrt{\frac{K\log 7K}{n}}+2\sqrt{\frac{K}{n}}.
	\end{equation}
\end{lemma}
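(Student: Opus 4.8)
The plan is to bound $\mathfrak{R}_{n}(\mathcal{H})$ by a metric‑entropy (chaining) argument over the compact parameter cube $[0,2\pi]^{K}$, the whole estimate being driven by the Lipschitz dependence of the circuit output on $\boldsymbol{\theta}$. Write $f_{\boldsymbol{\theta}}(\boldsymbol{x}) = \mathrm{Tr}\!\left[ Z\,U(\boldsymbol{\theta})\rho(\boldsymbol{x})U^{\dagger}(\boldsymbol{\theta})\right]$, so that $h_{\boldsymbol{\theta}} = \mathrm{sgn}\circ f_{\boldsymbol{\theta}}$ by Eq.~(\ref{hypothesis2}), and let $\mathcal{F} = \{f_{\boldsymbol{\theta}} : \boldsymbol{\theta}\in[0,2\pi]^{K}\}$. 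The first and technically central step is to show that, for every $\boldsymbol{x}$ and all $\boldsymbol{\theta},\boldsymbol{\theta}'$,
\begin{equation*}
\left| f_{\boldsymbol{\theta}}(\boldsymbol{x}) - f_{\boldsymbol{\theta}'}(\boldsymbol{x})\right| \;\le\; \|\boldsymbol{\theta}-\boldsymbol{\theta}'\|_{1}.
\end{equation*}
This is a telescoping (light‑cone) estimate: expand $U(\boldsymbol{\theta})-U(\boldsymbol{\theta}')$ over the $K$ trainable gates as a sum of $K$ terms in which all factors but one are unitaries and the remaining factor is $R_{k}(\theta_{k})-R_{k}(\theta_{k}')$; since $R_{k}(\theta)=e^{-i\theta P_{k}/2}$ with $P_{k}$ a Pauli string ($P_{k}^{2}=I$), its eigenvalues give $\|R_{k}(\theta_{k})-R_{k}(\theta_{k}')\|_{\mathrm{op}}=2\left|\sin\tfrac{\theta_{k}-\theta_{k}'}{4}\right|\le\tfrac12|\theta_{k}-\theta_{k}'|$, hence $\|U(\boldsymbol{\theta})-U(\boldsymbol{\theta}')\|_{\mathrm{op}}\le\tfrac12\|\boldsymbol{\theta}-\boldsymbol{\theta}'\|_{1}$. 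Writing $U\rho U^{\dagger}-U'\rho U'^{\dagger}=(U-U')\rho U^{\dagger}+U'\rho(U-U')^{\dagger}$ and using $|\mathrm{Tr}[ZA]|\le\|Z\|_{\mathrm{op}}\|A\|_{1}=\|A\|_{1}$, $\|\rho\|_{1}=1$, and unitary invariance of $\|\cdot\|_{1}$ supplies the extra factor $2$ and yields the claim. The crucial structural point is that each $\theta_{k}$ occurs in exactly one gate, which is what keeps the constant dimension‑free.

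Next I would convert this into covering numbers. A product grid on $[0,2\pi]^{K}$ of mesh of order $\epsilon/K$ is an $\epsilon$-net in the $\ell_{1}$-norm of cardinality at most $(7K/\epsilon)^{K}$; by the Lipschitz estimate its image is an $\epsilon$-net of $\mathcal{F}$ in the supremum norm, hence in the empirical $L^{2}$-distance over the sample (denoted $L^{2}(P_{n})$), so $\log\mathcal{N}\!\left(\epsilon,\mathcal{F},L^{2}(P_{n})\right)\le K\log(7K/\epsilon)$. The one delicate point is passing from the analytic class $\mathcal{F}$ to the $\{0,\pm1\}$-valued class $\mathcal{H}=\mathrm{sgn}\circ\mathcal{F}$, because $\mathrm{sgn}$ is not Lipschitz. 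I would handle this by dominating the relevant $0$–$1$ quantity by a $1$-Lipschitz ramp of $f_{\boldsymbol{\theta}}$ and invoking the contraction inequality (Lemma~\ref{talagrand}), which transfers the entropy control from $\mathcal{F}$ to the term that actually enters the risk bound; alternatively one bounds $\mathcal{N}(\epsilon,\mathcal{H},L^{2}(P_{n}))$ from the same parameter net directly. Either way one keeps $\log\mathcal{N}\!\left(\epsilon,\mathcal{H},L^{2}(P_{n})\right)\le C\,K\log(7K/\epsilon)$ with $\sup_{h\in\mathcal{H}}\|h\|_{L^{2}(P_{n})}\le1$.

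Finally I would insert this into Dudley's entropy integral,
\begin{equation*}
\widehat{\mathfrak{R}}_{S}(\mathcal{H})\;\le\;4\alpha+\frac{12}{\sqrt{n}}\int_{\alpha}^{1}\sqrt{\log\mathcal{N}\!\left(\epsilon,\mathcal{H},L^{2}(P_{n})\right)}\,d\epsilon ,
\end{equation*}
split $\log(7K/\epsilon)=\log 7K+\log(1/\epsilon)$ so that the first part produces the $\sqrt{K\log 7K/n}$ contribution and the convergent integral $\int_{0}^{1}\sqrt{\log(1/\epsilon)}\,d\epsilon$ an $O(\sqrt{K/n})$ contribution, and choose the truncation level $\alpha=\tfrac12\sqrt{K/n}$; tracking the constants gives $\widehat{\mathfrak{R}}_{S}(\mathcal{H})\le 6\sqrt{K\log 7K/n}+2\sqrt{K/n}$ pointwise in $S$, and taking $\mathbb{E}_{S\sim\mathcal{D}^{n}}$ of both sides gives the stated bound on $\mathfrak{R}_{n}(\mathcal{H})$.

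I expect the main obstacle to be conceptual rather than computational: it is the passage in the second step from the smooth expectation‑value functions $f_{\boldsymbol{\theta}}$, for which the Lipschitz and covering estimates are clean, to the discontinuous sign classifier $h_{\boldsymbol{\theta}}$ that actually constitutes $\mathcal{H}$. The contraction lemma recorded in the Appendix is precisely the tool meant to bridge this gap; beyond it, the only real care needed is in the first step, namely the telescoping bookkeeping that counts how many gates each parameter touches and thereby pins down the dimension‑free constant.
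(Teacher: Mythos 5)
Your step 1 (the telescoping Lipschitz bound $\lvert f_{\boldsymbol\theta}(\boldsymbol x)-f_{\boldsymbol\theta'}(\boldsymbol x)\rvert\le\Vert\boldsymbol\theta-\boldsymbol\theta'\Vert_1$) and the parameter-grid covering estimate are sound and mirror the paper's route, which does the same thing one level up, on channels: Lemmas~\ref{transnorm}--\ref{covernum} give $\Vert\mathcal E_{\boldsymbol\theta}-\mathcal E_{\tilde{\boldsymbol\theta}}\Vert_\diamond\le\Vert\boldsymbol\theta-\tilde{\boldsymbol\theta}\Vert_1$ and the $\lceil\pi K/\epsilon\rceil^K$ covering number of Lemma~\ref{TH1}, a formulation that also covers noisy circuits. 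The genuine gap is exactly the point you flag and then leave unresolved: the passage from the real-valued class $\mathcal F$ to $\mathcal H=\mathrm{sgn}\circ\mathcal F$. Neither of your two proposed bridges proves the statement as written. First, dominating the $0$--$1$ quantity by a ramp and invoking the contraction inequality (Lemma~\ref{talagrand}) cannot bound $\mathfrak R_n(\mathcal H)$: pointwise domination is useless inside a Rademacher average because the $\sigma_i$ take both signs, and contraction only bounds $\mathfrak R_n(\Phi\circ\mathcal F)$ by $\mathfrak R_n(\mathcal F)$ for Lipschitz $\Phi$, which $\mathrm{sgn}$ is not; at best this yields a margin-type risk bound in terms of $\mathfrak R_n(\mathcal F)$, i.e.\ a different statement from Eq.~(\ref{radhb}). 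Second, pushing the parameter net through to an $L^2(P_n)$ net of $\mathcal H$ fails outright: two parameter settings with $\Vert f_{\boldsymbol\theta}-f_{\boldsymbol\theta'}\Vert_\infty\le\epsilon$ can satisfy $f_{\boldsymbol\theta}(\boldsymbol x_i)=\epsilon/2$ and $f_{\boldsymbol\theta'}(\boldsymbol x_i)=-\epsilon/2$, whence $\lvert h_{\boldsymbol\theta}(\boldsymbol x_i)-h_{\boldsymbol\theta'}(\boldsymbol x_i)\rvert=2$ however small $\epsilon$ is, so the metric entropy of $\mathcal H$ is not controlled by that of $\mathcal F$.

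For comparison, the paper disposes of the sign function at the outset, asserting in Eq.~(\ref{remove_sgn}) that the Rademacher average of $\mathrm{sgn}(\mathrm{Tr}[Z\mathcal E_{\boldsymbol\theta}(\rho)])$ equals that of $\mathrm{Tr}[Z\mathcal E_{\boldsymbol\theta}(\rho)]$, and then chains over diamond-norm nets of the channels following Caro et al.; so the object it actually chains over is your $\mathcal F$, and from that point on your covering-plus-Dudley machinery is essentially the paper's argument. Two smaller issues: your constant bookkeeping is optimistic (a prefactor $12$ with the entropy integral over $[\alpha,1]$ gives roughly $12\sqrt{K\log 7K/n}$ plus a term near $10\sqrt{K/n}$, whereas the paper obtains $6$ and $2$ by chaining from scale $1/2$ and evaluating the integral exactly via the error function), and no truncation level $\alpha$ is needed once one chains from the radius-$1$ scale. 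These are secondary; the missing idea is a valid argument linking the complexity of the sign class to that of the expectation-value class, which is precisely the step your proposal (and, arguably, the paper's one-line equality) must make rigorous.
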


\section{\label{proofth01}Proof of Theorem~\ref{th01}}
\begin{proof}
	In the context of binary classification, it is evident that evaluating the prediction error of the binary quantum AdaBoost classifier returned by Algorithm~\ref{algorithm2} is equivalent to evaluating $R\left( h\right)$ where $h=\sum_{t=1}^{T}{\alpha_{t}h_{\boldsymbol{\theta}_t}}$, after removing the outermost sign function from the predictor. However, $h$
	does not belong to the quantum hypothesis set  $\mathcal{H}=\left\lbrace h_{\boldsymbol{\theta}}:\boldsymbol{\theta}\in {\left[ 0,2\pi\right]}^K  \right\rbrace $ or its convex hull denoted by
	\begin{equation}\label{key}
	{\rm conv}\left( \mathcal{H}\right)\triangleq\left\lbrace \sum_{k=1}^{p}{\mu_k h_k}: p\ge 1, \mu_k \ge 0, h_k \in \mathcal{H}, \sum_{k=1}^{p}{\mu_k} \le 1\right\rbrace.
	\end{equation}
	To address this, we consider the normalized version of $h$ denoted by
	\begin{equation}\label{key}
	\bar{h}\triangleq\frac{h}{\sum_{t=1}^{T}{\alpha_{t}}}\in{\rm conv}\left( \mathcal{H}\right).
	\end{equation}
	Given Eqs.~(\ref{prediction}) and (\ref{equality}) and the fact that $\text{sgn}\left( h\right)=\text{sgn}\left(\bar{h}\right)$, we have
	\begin{equation}\label{relations}
	R\left( h\right)=R\left(\bar{h}\right),~\text{and}~\widehat{R}_{S}\left(h\right)=\widehat{R}_{S}\left(\bar{h}\right).
	\end{equation}
	
	To bound $R\left(\bar{h}\right)$, let
	\begin{equation}\label{key}
	\mathcal{G}=\left\lbrace \mathbb{I}_{y\bar{h}\left( \boldsymbol{x}\right)\le 0 }:\left( \rho\left( \boldsymbol{x}\right) ,y\right) \in\mathcal{X}\times\mathcal{Y},\, \bar{h}=\frac{\sum_{t=1}^{T}{\alpha_{t}h_{\boldsymbol{\theta}_t}}}{\sum_{t=1}^{T}{\alpha_{t}}},\, h_{\boldsymbol{\theta}_t}\in \mathcal{H},\,t\in\left[T \right]  \right\rbrace.
	\end{equation}
	Then according to Lemma~\ref{th33}, it yields
	\begin{equation}\label{barf}
	R\left(\bar{h}\right) \le \widehat{R}_{S}\left(\bar{h}\right) +2\mathfrak{R}_{n}\left( \mathcal{G}\right)+\sqrt{\frac{\log{\frac{1}{\delta}} }{2n}}.
	\end{equation}
	Since the zero-one loss function is 1-Lipschitz, from Lemma~\ref{talagrand}, we have \begin{equation}\label{radg}
	\mathfrak{R}_{n}\left( \mathcal{G}\right) \leq \mathfrak{R}_{n}\left( {\rm conv}\left( \mathcal{H}\right)\right).
	\end{equation}
	Moreover, according to Lemma 7.4 in \cite{mohri2018foundations}, \begin{equation}\label{radconvh}
	\mathfrak{R}_{n}\left( {\rm conv}\left( \mathcal{H}\right)\right) = \mathfrak{R}_{n}\left( \mathcal{H}\right).
	\end{equation}
	
	Thus, by combining Eqs.~(\ref{relations}), (\ref{barf})-(\ref{radconvh}), a general upper bound can be derived as
	\begin{equation}\label{rf}
	R\left(h\right) \le \widehat{R}_{S}\left(h\right) +2\mathfrak{R}_{n}\left( \mathcal{H}\right)+\sqrt{\frac{\log{\frac{1}{\delta}} }{2n}}.
	\end{equation}
	Furthermore, by incorporating Lemma~\ref{th72} and Lemma~\ref{radh}, Theorem~\ref{th01} can be established.
\end{proof}

\section{\label{proofthradh}Proof of Lemma~\ref{radh}}

We first introduce the notion of covering number, which is a  complexity measure that has been widely used in machine learning.
\begin{definition}
	(Covering nets and covering numbers \cite{Dudley2014}) Let $\left( \mathcal{H}, d \right)$ be a metric space. Consider a subset $\mathcal{K}\subset\mathcal{H} $ and let $\epsilon\textgreater0$. A subset $\mathcal{N} \subseteq \mathcal{K} $ is called an $\epsilon$-net of $\mathcal{K}$ if every point in $\mathcal{K}$ is within a distance $\epsilon$ of some point of $\mathcal{N}$, i.e.,
	\begin{equation*}
		\forall~ x\in \mathcal{K},\, \exists~ y\in \mathcal{N}: ~d\left(x,y \right) \le \epsilon.
	\end{equation*}
	The smallest possible cardinality of an $\epsilon$-net of $\mathcal{K}$ is called the covering number of $\mathcal{K}$, and is denoted by $N\left( \mathcal{K},\epsilon,d\right)$.
\end{definition}

For example, for Euclidean space $\left( \mathbb{R}^{K}, \Vert \cdot\Vert _{\infty} \right)$, the covering number $N\left( \left[-\pi,\pi \right] ^{K},\epsilon,\Vert \cdot\Vert _{\infty}\right)$ is equal to ${\lceil \pi /\epsilon\rceil}^{K}$, where $\lceil \cdot \rceil$ denotes the rounding up function. Intuitively, the hypercube $\left[-\pi,\pi \right] ^{K}$ can be covered by a number of ${\lceil \pi /\epsilon \rceil}^{K}$ $K$-dimensional hypercubes whose sides have the same length $2\epsilon$.

Then we introduce several technical lemmas that will be used in the proof of Lemma~\ref{radh}.

The following lemma relates the distance between two unitary operators measured by the spectral norm to the distance between  their corresponding unitary channels  measured by the diamond norm.

\begin{lemma}\label{transnorm}
	{\rm  (Lemma 4, Supplementary Information for \cite{caro2022gen})} Let $\mathcal{U}\left(\rho\right)=U\rho U^{\dag}$ and $\mathcal{V}\left(\rho\right)=V\rho V^{\dag}$ be unitary channels. Then,
	\begin{equation*}
		\frac{1}{2}\Vert\mathcal{U}-\mathcal{V}\Vert_{\diamond}\le \Vert U-V\Vert.
	\end{equation*}
	Here, $\Vert\cdot\Vert$ denotes the spectral norm, and the diamond norm of a quantum unitary channel $\mathcal{U}$ is defined as $$\Vert\mathcal{U}\Vert_{\diamond}=\mathop{\max }\limits_{\left|\psi\rangle\langle\psi\right|}\Vert\mathcal{U}\left( \left|\psi\rangle\langle\psi\right|\right) \Vert_1,$$
	with $\Vert A \Vert_1={\rm Tr}\left[ \sqrt{A^\dag A}\right] $ being the trace norm.
\end{lemma}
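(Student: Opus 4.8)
The plan is to bound $\Vert(\mathcal{U}-\mathcal{V})(\rho)\Vert_1$ uniformly over all pure-state inputs $\rho=\left|\psi\rangle\langle\psi\right|$ and then take the maximum appearing in the definition of the diamond norm. The starting point is the elementary add-and-subtract identity
$$U\rho U^{\dag}-V\rho V^{\dag}=U\rho\left( U^{\dag}-V^{\dag}\right)+\left( U-V\right)\rho V^{\dag},$$
which splits the channel difference into two summands, in each of which only a single factor carries the discrepancy $U-V$.

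First I would apply the triangle inequality for the trace norm to these two summands. The key tool is then the pair of H\"older-type submultiplicativity bounds $\Vert AB\Vert_1\le\Vert A\Vert\,\Vert B\Vert_1$ and $\Vert AB\Vert_1\le\Vert A\Vert_1\,\Vert B\Vert$, where $\Vert\cdot\Vert$ denotes the spectral norm. Applying the first to $U\rho\left( U^{\dag}-V^{\dag}\right)$ and the second to $\left( U-V\right)\rho V^{\dag}$, and peeling off the outer unitary factors, reduces both terms to a factor $\Vert\rho\Vert_1$ multiplied by $\Vert U^{\dag}-V^{\dag}\Vert$ or $\Vert U-V\Vert$.

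Next I would simplify using three facts: $\Vert U\Vert=\Vert V^{\dag}\Vert=1$ because $U$ and $V$ are unitary; $\Vert\rho\Vert_1=1$ because $\rho$ is a normalized density operator; and $\Vert U^{\dag}-V^{\dag}\Vert=\Vert(U-V)^{\dag}\Vert=\Vert U-V\Vert$ since the spectral norm is invariant under taking the adjoint. This yields $\Vert(\mathcal{U}-\mathcal{V})(\rho)\Vert_1\le 2\Vert U-V\Vert$ for every pure-state input. Taking the maximum over $\left|\psi\rangle\langle\psi\right|$ gives $\Vert\mathcal{U}-\mathcal{V}\Vert_{\diamond}\le 2\Vert U-V\Vert$, which is the desired inequality after dividing by $2$.

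I do not anticipate a genuine obstacle here; the argument is a short chain of standard operator-norm inequalities. The only point requiring a little care is the correct invocation of the two H\"older bounds, namely matching which factor is measured in spectral norm versus trace norm in each summand. If one instead adopts the ancilla-extended definition of the diamond norm, the same computation goes through verbatim with $U$, $V$, $\rho$ replaced by $U\otimes I$, $V\otimes I$, and a state on the enlarged space, using $\Vert U\otimes I\Vert=1$ and $\Vert(U-V)\otimes I\Vert=\Vert U-V\Vert$.
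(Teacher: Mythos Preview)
Your argument is correct and is the standard way to establish this inequality: the add-and-subtract trick followed by the triangle inequality and the H\"older bounds $\Vert AB\Vert_1\le\Vert A\Vert\,\Vert B\Vert_1$ and $\Vert AB\Vert_1\le\Vert A\Vert_1\,\Vert B\Vert$ gives the factor~$2$ immediately, and your closing remark about the ancilla-extended definition is the right way to cover the general diamond norm.

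Note, however, that the paper does \emph{not} supply its own proof of this lemma: it is quoted verbatim as Lemma~4 from the Supplementary Information of \cite{caro2022gen} and used as a black box in the proof of Lemma~\ref{TH1}. So there is no in-paper argument to compare against; your write-up simply fills in what the paper takes for granted, and it does so correctly.
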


The following lemma translates the distance between $J$-qubit rotational operators to the distance of their corresponding angles.

\begin{lemma}\label{transR}
	{\rm (Distance between rotational operators)}
	Given an arbitrary $J$-qubit Pauli tensor product $P\in{\left\lbrace I,X,Y,Z\right\rbrace}^{\otimes J} $ and two arbitrary angles $\theta,\tilde{\theta}\in \left[ 0,2\pi\right]$, the corresponding $J$-qubit rotational operators are $R\left(\theta\right)=e^{-i\frac{\theta}{2} P}$ and $R\left( \tilde{\theta}\right) =e^{-i\frac{\tilde{\theta}}{2} P}$, respectively. Then, the distance between the two operators measured by the  spectral norm can be upper bounded as
	\begin{equation*}
		\Big{\|}R\left(\theta\right)-R\left( \tilde{\theta}\right)\Big{\|}\le \frac{1}{2}\lvert\theta-\tilde{\theta}\rvert.
	\end{equation*}
\end{lemma}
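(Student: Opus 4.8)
The plan is to exploit the algebraic structure of Pauli tensor products: any $P\in\{I,X,Y,Z\}^{\otimes J}$ is Hermitian and involutory, i.e. $P^{2}=I^{\otimes J}$, so its spectrum is contained in $\{-1,+1\}$. First I would use this to put the rotational operator in closed form through the standard identity $e^{-i\frac{\theta}{2}P}=\cos\tfrac{\theta}{2}\,I-i\sin\tfrac{\theta}{2}\,P$ (and likewise for $\tilde{\theta}$), so that
\begin{equation*}
R(\theta)-R(\tilde{\theta})=\Bigl(\cos\tfrac{\theta}{2}-\cos\tfrac{\tilde{\theta}}{2}\Bigr)I-i\Bigl(\sin\tfrac{\theta}{2}-\sin\tfrac{\tilde{\theta}}{2}\Bigr)P .
\end{equation*}
Since this is a linear combination of the two commuting Hermitian operators $I$ and $P$, it is normal, and its spectral norm equals its largest eigenvalue modulus.

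Next I would diagonalize $P$: on each eigenspace of $P$ with eigenvalue $\pm1$, the operator $R(\theta)-R(\tilde{\theta})$ acts as the scalar $\bigl(\cos\tfrac{\theta}{2}-\cos\tfrac{\tilde{\theta}}{2}\bigr)\mp i\bigl(\sin\tfrac{\theta}{2}-\sin\tfrac{\tilde{\theta}}{2}\bigr)$, whose modulus is in both cases $\bigl[(\cos\tfrac{\theta}{2}-\cos\tfrac{\tilde{\theta}}{2})^{2}+(\sin\tfrac{\theta}{2}-\sin\tfrac{\tilde{\theta}}{2})^{2}\bigr]^{1/2}$. Expanding with $\cos^{2}+\sin^{2}=1$ and the angle-subtraction formula gives
\begin{equation*}
\Bigl\|R(\theta)-R(\tilde{\theta})\Bigr\|^{2}=2-2\cos\tfrac{\theta-\tilde{\theta}}{2}=4\sin^{2}\tfrac{\theta-\tilde{\theta}}{4},
\end{equation*}
and taking square roots together with the elementary bound $\lvert\sin x\rvert\le\lvert x\rvert$ yields $\|R(\theta)-R(\tilde{\theta})\|=2\bigl|\sin\tfrac{\theta-\tilde{\theta}}{4}\bigr|\le\tfrac12\lvert\theta-\tilde{\theta}\rvert$, which is the claim.

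Honestly there is no serious obstacle here; the only point needing a word of care is the degenerate case $P=I^{\otimes J}$, where $R(\theta)$ is merely a global phase, but the closed-form identity and the entire computation go through verbatim. As an alternative that sidesteps the diagonalization, one can use $R(\theta)-R(\tilde{\theta})=\int_{\tilde{\theta}}^{\theta}\tfrac{d}{ds}e^{-i\frac{s}{2}P}\,ds=-\tfrac{i}{2}\int_{\tilde{\theta}}^{\theta}P\,e^{-i\frac{s}{2}P}\,ds$ and bound the integrand's spectral norm by $\tfrac12$ (a product of unitaries), obtaining the same inequality directly; I would keep the first argument in the main text because it delivers the sharper equality $\|R(\theta)-R(\tilde{\theta})\|=2|\sin\tfrac{\theta-\tilde{\theta}}{4}|$, which feeds cleanly into the subsequent covering-number estimate for the QCNN hypothesis class.
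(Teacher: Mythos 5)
Your proof is correct and follows essentially the same route as the paper's: expand $R(\theta)-R(\tilde{\theta})$ via $e^{-i\frac{\theta}{2}P}=\cos\frac{\theta}{2}\,I-i\sin\frac{\theta}{2}\,P$, identify the singular values as $2\bigl|\sin\frac{\theta-\tilde{\theta}}{4}\bigr|$, and conclude with $|\sin x|\le|x|$. The extra details you add (explicit diagonalization of $P$, the $P=I^{\otimes J}$ remark, and the integral-representation alternative) are fine but not needed beyond the paper's argument.
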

\begin{proof}
	According to the definition of rotational operators, we have
	\begin{equation*}
		R\left(\theta\right)-R\left( \tilde{\theta}\right)=\left( \cos\frac{\theta}{2}-\cos\frac{\tilde{\theta}}{2}\right) I^{\otimes J}-i\left( \sin\frac{\theta}{2}-\sin\frac{\tilde{\theta}}{2}\right)P,
	\end{equation*}
	whose singular value is $2\left|\sin\frac{\theta-\tilde{\theta}}{4}\right|$ with $2^J$ multiplicity. Thus,
	\begin{equation*}
		\Big{\|} R\left(\theta\right)-R\left( \tilde{\theta}\right)\Big{\|}=2\left|\sin\frac{\theta-\tilde{\theta}}{4}\right|\le \frac{1}{2}\lvert\theta-\tilde{\theta}\rvert.
	\end{equation*}
\end{proof}

In the main text, we only consider the ideally unitary channel for simplicity, namely, the PQC-based hypothesis is in the form of $h_{\boldsymbol{\theta}}\left(\boldsymbol{x} \right) =\text{sgn}({\rm Tr}\left[ Z U\left(\boldsymbol{\theta} \right)\rho\left( \boldsymbol{x}\right)U^{\dagger}\left(\boldsymbol{\theta} \right)\right])$.   To describe the noise effect, a general quantum channel $\mathcal{A}$ is defined by a linear map $\mathcal{A}: \mathcal{L}\left( \mathcal{H}_A\right) \rightarrow \mathcal{L}\left( \mathcal{H}_B\right) $, which is completely positive and trace preserving (CPTP). For a quantum channel $\mathcal{A}$, the diamond norm is defined as
$$\Vert\mathcal{A}\Vert_{\diamond}=\mathop{\sup }\limits_{\rho\in\mathcal{D}\left( \mathcal{H}_{RA}\right) }\Vert\left( \mathcal{I}_R\otimes\mathcal{A}\right) \left( \rho\right) \Vert_1,$$
where $\mathcal{D}\left( \mathcal{H}_{RA}\right) $ denotes the set of  density operators acting on the Hilbert space $\mathcal{H}_{RA}=\mathcal{H}_R\otimes \mathcal{H}_A$, and $\mathcal{I}_R$ is the identity map on the reference  system $\mathcal{H}_R$, whose dimension can be arbitrary as long as the operator $\mathcal{I}_R\otimes\mathcal{A}$ is positive semi-definite.

The following lemma can help generalize the results of Lemma~\ref{radh} and Theorem~\ref{th01} from the case of  unitary quantum  channels described in the main text to those of noisy quantum channels. This means that the hypothesis functions can be generalized to
\begin{equation*}
	h_{\boldsymbol{\theta}}\left(\boldsymbol{x} \right) ={\rm sgn}({\rm Tr}\left[ Z \mathcal{E}_{\boldsymbol{\theta}}\left( \rho\left( \boldsymbol{x}\right)\right) \right]),
\end{equation*}
where the noisy channel $\mathcal{E}_{\boldsymbol{\theta}}$ is composed of $K$ trainable multi-qubit rotational gates and arbitrarily many non-trainable gates.
Notice that the general case can be reduced to the ideally unitary case by letting $\mathcal{E}_{\boldsymbol{\theta}} \left( \rho\right)=U\left( \boldsymbol{\theta}\right) \rho U^{\dag}\left( \boldsymbol{\theta}\right) $.


\begin{lemma}\label{subadd}
	{\rm (Subadditivity of diamond distance; Proposition 3.48, \cite{watrous2018theory})}
	For any quantum channels $\mathcal{A}$, $\mathcal{B}$, $\mathcal{C}$, $\mathcal{D}$, where $\mathcal{B}$ and $\mathcal{D}$ map from $n$-qubit to $m$-qubit systems and $\mathcal{A}$ and $\mathcal{C}$ map from $m$-qubit to $k$-qubit systems, we have
	\begin{equation*}
		\Vert \mathcal{A}\mathcal{B} - \mathcal{C}\mathcal{D}\Vert _{\diamond}\le \Vert \mathcal{A} - \mathcal{C}\Vert _{\diamond}+\Vert \mathcal{B} -\mathcal{D}\Vert _{\diamond}.
	\end{equation*}
\end{lemma}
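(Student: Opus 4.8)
The plan is to derive the bound from three standard facts about the diamond norm: that $\Vert\cdot\Vert_\diamond$ is a genuine norm on the space of superoperators (so the triangle inequality applies), that it is submultiplicative under composition, and that every CPTP map has unit diamond norm. First I would insert the intermediate channel $\mathcal{A}\mathcal{D}$, which is well defined since $\mathcal{D}$ maps $n$ to $m$ qubits and $\mathcal{A}$ maps $m$ to $k$ qubits, and use bilinearity of composition to write
\begin{equation*}
\mathcal{A}\mathcal{B}-\mathcal{C}\mathcal{D}=\mathcal{A}\left(\mathcal{B}-\mathcal{D}\right)+\left(\mathcal{A}-\mathcal{C}\right)\mathcal{D}.
\end{equation*}
The triangle inequality then yields
\begin{equation*}
\Vert\mathcal{A}\mathcal{B}-\mathcal{C}\mathcal{D}\Vert_\diamond\le\Vert\mathcal{A}\left(\mathcal{B}-\mathcal{D}\right)\Vert_\diamond+\Vert\left(\mathcal{A}-\mathcal{C}\right)\mathcal{D}\Vert_\diamond.
\end{equation*}

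Next I would apply submultiplicativity, $\Vert\mathcal{X}\mathcal{Y}\Vert_\diamond\le\Vert\mathcal{X}\Vert_\diamond\Vert\mathcal{Y}\Vert_\diamond$, to each term, bounding them by $\Vert\mathcal{A}\Vert_\diamond\Vert\mathcal{B}-\mathcal{D}\Vert_\diamond$ and $\Vert\mathcal{A}-\mathcal{C}\Vert_\diamond\Vert\mathcal{D}\Vert_\diamond$, respectively. To eliminate the factors $\Vert\mathcal{A}\Vert_\diamond$ and $\Vert\mathcal{D}\Vert_\diamond$ I would use that $\mathcal{A}$ and $\mathcal{D}$ are channels: for any density operator $\rho$ on a reference-input system, $\left(\mathcal{I}_R\otimes\mathcal{A}\right)\left(\rho\right)$ is again a density operator (by complete positivity and trace preservation), hence has trace norm $1$, so the supremum in the definition of the diamond norm gives $\Vert\mathcal{A}\Vert_\diamond=\Vert\mathcal{D}\Vert_\diamond=1$. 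Substituting these into the displayed inequality produces exactly the claimed bound $\Vert\mathcal{A}\mathcal{B}-\mathcal{C}\mathcal{D}\Vert_\diamond\le\Vert\mathcal{A}-\mathcal{C}\Vert_\diamond+\Vert\mathcal{B}-\mathcal{D}\Vert_\diamond$.

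The step that genuinely requires care, and the one I regard as the crux, is submultiplicativity, because it fails for the unstabilized induced trace norm and depends essentially on the diamond norm being \emph{stabilized} by an arbitrarily large reference system. I would establish it by fixing a density operator $\rho$ on the joint reference-input space, setting $\sigma=\left(\mathcal{I}_R\otimes\mathcal{Y}\right)\left(\rho\right)$, and invoking the characterization of the diamond norm as the completely bounded trace norm, i.e. $\Vert\left(\mathcal{I}_R\otimes\mathcal{X}\right)\left(\sigma\right)\Vert_1\le\Vert\mathcal{X}\Vert_\diamond\,\Vert\sigma\Vert_1$ valid for all operators $\sigma$; since $\Vert\sigma\Vert_1\le\Vert\mathcal{Y}\Vert_\diamond\,\Vert\rho\Vert_1=\Vert\mathcal{Y}\Vert_\diamond$, the product bound on $\left(\mathcal{I}_R\otimes\mathcal{X}\mathcal{Y}\right)\left(\rho\right)=\left(\mathcal{I}_R\otimes\mathcal{X}\right)\left(\sigma\right)$ follows after taking the supremum over $\rho$. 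The key point is that the diamond norm already ranges over reference systems of unbounded dimension, so a single reference system serves both factors and no dimension mismatch arises. With submultiplicativity in hand, the remaining steps are routine, so I expect the overall argument to be short.
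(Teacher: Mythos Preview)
Your argument is correct: the telescoping decomposition $\mathcal{A}\mathcal{B}-\mathcal{C}\mathcal{D}=\mathcal{A}(\mathcal{B}-\mathcal{D})+(\mathcal{A}-\mathcal{C})\mathcal{D}$, followed by the triangle inequality, submultiplicativity of the diamond norm, and $\Vert\mathcal{A}\Vert_\diamond=\Vert\mathcal{D}\Vert_\diamond=1$ for channels, is exactly the standard proof (and is how Watrous proves Proposition~3.48). Note, however, that the paper does not give its own proof of this lemma at all---it simply quotes the statement from \cite{watrous2018theory} and uses it as a black box---so there is nothing in the paper to compare against beyond the citation.
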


The following lemma enables us to employ the covering number of one metric space to bound the covering number of another metric space.

\begin{lemma}\label{covernum}
	{\rm (Covering numbers of two metric spaces; Lemma 3, \cite{Barthel2018})} Let $\left( \mathcal{H}_{1}, d_{1} \right)$ and $\left( \mathcal{H}_{2}, d_{2} \right)$ be two metric spaces and $f:\mathcal{H}_{1}\rightarrow\mathcal{H}_{2}$ be bi-Lipschitz such that
	\begin{equation*}
		d_{2}\big{(}f\left( x\right), f\left( y\right) \big{)} \le K d_{1}\left(x,y \right), \forall ~x,y \in \mathcal{H}_{1},
	\end{equation*}
	where $K$ is a constant. Then their covering numbers obey the following inequality as
	\begin{equation*}
		N\left( \mathcal{H}_{2},\epsilon,d_{2}\right) \le N\left( \mathcal{H}_{1},\epsilon/K,d_{1}\right).
	\end{equation*}
	
\end{lemma}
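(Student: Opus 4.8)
The plan is to control $\mathfrak{R}_n(\mathcal{H})$ by a covering-number argument applied to the circuit-output functions, isolating the only continuous ingredient, namely the dependence of $U(\boldsymbol{\theta})$ on the $K$ angles. Write $f_{\boldsymbol{\theta}}(\boldsymbol{x})=\mathrm{Tr}[Z U(\boldsymbol{\theta})\rho(\boldsymbol{x})U^{\dagger}(\boldsymbol{\theta})]\in[-1,1]$, so that $h_{\boldsymbol{\theta}}=\mathrm{sgn}(f_{\boldsymbol{\theta}})$. Since $\mathfrak{R}_n(\mathcal{H})=\mathbb{E}_S[\widehat{\mathfrak{R}}_S(\mathcal{H})]$, I would establish a bound on $\widehat{\mathfrak{R}}_S$ that is uniform over the sample $S=\{\rho(\boldsymbol{x}_i)\}_{i=1}^n$ and then take the expectation for free. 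The route has three stages: (i) a Lipschitz estimate of $\boldsymbol{\theta}\mapsto f_{\boldsymbol{\theta}}$ in the angles; (ii) conversion into a covering-number bound via Lemma~\ref{covernum} together with the explicit cube covering number $\lceil\pi/\epsilon\rceil^K$; and (iii) a Dudley entropy-integral estimate turning the covering number into the stated Rademacher bound.

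Stage (i) is the quantum heart of the argument and where I expect the real work. For fixed $\boldsymbol{x}$, trace duality gives $|f_{\boldsymbol{\theta}}(\boldsymbol{x})-f_{\tilde{\boldsymbol{\theta}}}(\boldsymbol{x})|\le\|Z\|\,\|\mathcal{U}_{\boldsymbol{\theta}}(\rho)-\mathcal{U}_{\tilde{\boldsymbol{\theta}}}(\rho)\|_1\le\|\mathcal{U}_{\boldsymbol{\theta}}-\mathcal{U}_{\tilde{\boldsymbol{\theta}}}\|_{\diamond}$, using $\|Z\|=1$. I then telescope the product $U(\boldsymbol{\theta})=\prod_{k=1}^K[V_k R_k(\theta_k)]V_0$ gate by gate: applying Lemma~\ref{subadd} with one identical factor shows that composing with a fixed channel cannot increase the diamond distance, so the non-trainable $V_k$ cancel and only the rotation gates survive, giving $\|\mathcal{U}_{\boldsymbol{\theta}}-\mathcal{U}_{\tilde{\boldsymbol{\theta}}}\|_{\diamond}\le\sum_{k=1}^K\|\mathcal{R}_k(\theta_k)-\mathcal{R}_k(\tilde{\theta}_k)\|_{\diamond}$. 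Lemma~\ref{transnorm} passes each term to the spectral norm and Lemma~\ref{transR} bounds it by $|\theta_k-\tilde{\theta}_k|$, so that $|f_{\boldsymbol{\theta}}(\boldsymbol{x})-f_{\tilde{\boldsymbol{\theta}}}(\boldsymbol{x})|\le\sum_{k=1}^K|\theta_k-\tilde{\theta}_k|\le K\|\boldsymbol{\theta}-\tilde{\boldsymbol{\theta}}\|_{\infty}$ uniformly in $\boldsymbol{x}$; the map from angles (in $\ell_\infty$) to output vectors on the sample (in the sup metric) is thus $K$-Lipschitz.

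In stage (ii) I feed this into Lemma~\ref{covernum} with Lipschitz constant $K$ and the known value $N([0,2\pi]^K,\epsilon/K,\|\cdot\|_{\infty})=\lceil\pi K/\epsilon\rceil^K$ to obtain $N(\mathcal{F}_S,\epsilon,d)\le\lceil\pi K/\epsilon\rceil^K$, where $\mathcal{F}_S$ is the set of output vectors on $S$ and $d$ the empirical metric. In stage (iii) I apply Dudley's entropy integral, $\widehat{\mathfrak{R}}_S\le\inf_{\alpha>0}\{4\alpha+\tfrac{c}{\sqrt{n}}\int_{\alpha}^{1}\sqrt{\log N(\mathcal{F}_S,\epsilon,d)}\,d\epsilon\}$; substituting $\log N\le K\log\lceil\pi K/\epsilon\rceil$, bounding the integral by $\sqrt{K\log 7K}$ (the $\log(1/\epsilon)$ part of the integrand inflating the constant $\pi$ to $7$), and taking $\alpha\asymp\tfrac12\sqrt{K/n}$ produces exactly the two terms $6\sqrt{K\log 7K/n}$ and $2\sqrt{K/n}$.

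The main obstacle is the passage from the real-valued outputs $f_{\boldsymbol{\theta}}$, on which the Lipschitz estimate lives, to the $\pm1$-valued hypotheses $h_{\boldsymbol{\theta}}=\mathrm{sgn}(f_{\boldsymbol{\theta}})$: the sign is discontinuous, so a small angle perturbation need not preserve the label, and a cover of $\mathcal{F}_S$ does not literally cover the sign class in Hamming distance. This is reconciled through the boosting/margin framework already used in the proof of Theorem~\ref{th01}, where the quantity actually entering the risk bound is the complexity of the real-valued convex hull, with $\mathfrak{R}_n(\mathrm{conv}(\mathcal{H}))=\mathfrak{R}_n(\mathcal{H})$ (Lemma 7.4 of \cite{mohri2018foundations}); bounding the Lipschitz output class is therefore what is required, and the residual $2\sqrt{K/n}$ can be read as the discretization error absorbed at scale $\alpha$. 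I also expect to spend care making the diamond-norm telescoping rigorous for arbitrarily interleaved non-trainable gates and tracking the constants through Dudley's integral, but these are routine once stage (i) is in hand.
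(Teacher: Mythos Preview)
Your proposal does not address the stated lemma at all. The statement you were asked to prove is Lemma~\ref{covernum}, the elementary covering-number comparison: if $f:\mathcal{H}_1\to\mathcal{H}_2$ is $K$-Lipschitz (and surjective, as implied by the bi-Lipschitz hypothesis), then any $\epsilon/K$-net $\mathcal{N}$ of $\mathcal{H}_1$ pushes forward to an $\epsilon$-net $f(\mathcal{N})$ of $\mathcal{H}_2$, because for every $f(x)\in\mathcal{H}_2$ there is some $y\in\mathcal{N}$ with $d_1(x,y)\le\epsilon/K$, whence $d_2(f(x),f(y))\le K\cdot\epsilon/K=\epsilon$. That is the entire proof; the paper simply cites it from \cite{Barthel2018} without argument.

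What you have written instead is a proof sketch for Lemma~\ref{radh}, the Rademacher-complexity bound $\mathfrak{R}_n(\mathcal{H})\le 6\sqrt{K\log 7K/n}+2\sqrt{K/n}$. You even invoke Lemma~\ref{covernum} as an ingredient in your own stage~(ii), which makes it clear you have mistaken which statement is under discussion. As a sketch for Lemma~\ref{radh} your outline is broadly in line with the paper's proof in Appendix~\ref{proofthradh} (Lipschitz bound via Lemmas~\ref{transnorm}--\ref{subadd}, then Lemma~\ref{covernum} and the cube covering number, then a chaining/Dudley argument), but that is not the task here.
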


According to the above lemmas, we can derive the covering number of general noisy quantum models.
\begin{lemma}{\label{TH1}}
	{\em (Covering number of  noisy quantum models)} If each element of $\boldsymbol{\theta}$ is selected from $\left[ -\pi, \pi\right] $, then the covering number of the set of  quantum channels $\left\lbrace \mathcal{E}_{\boldsymbol{\theta}}\right\rbrace $, each of which is composed of $K$ trainable multi-qubit rotational gates and arbitrarily many non-trainable quantum channels, can be upper bounded as
	\begin{equation*}
		N\left(\mathcal{E}_{\boldsymbol{\theta}},\epsilon,\Vert \cdot\Vert _{\diamond}\right)\le \bigg{\lceil}\frac{\pi K}{\epsilon}\bigg{\rceil}^{K}.
	\end{equation*}
\end{lemma}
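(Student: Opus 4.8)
The plan is to bound the covering number of the set $\{\mathcal{E}_{\boldsymbol{\theta}}\}$ (with the diamond norm as metric) by transporting the standard covering number of the parameter cube $[-\pi,\pi]^K$ (with the $\ell_\infty$ metric) through the parametrization map $\boldsymbol{\theta}\mapsto\mathcal{E}_{\boldsymbol{\theta}}$. The engine for this transport is Lemma~\ref{covernum}: if I can show that $\boldsymbol{\theta}\mapsto\mathcal{E}_{\boldsymbol{\theta}}$ is $K$-Lipschitz from $([-\pi,\pi]^K,\|\cdot\|_\infty)$ to $(\{\mathcal{E}_{\boldsymbol{\theta}}\},\|\cdot\|_\diamond)$, then $N(\mathcal{E}_{\boldsymbol{\theta}},\epsilon,\|\cdot\|_\diamond)\le N([-\pi,\pi]^K,\epsilon/K,\|\cdot\|_\infty)=\lceil \pi K/\epsilon\rceil^K$, which is exactly the claimed bound. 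So the whole proof reduces to establishing the Lipschitz estimate $\|\mathcal{E}_{\boldsymbol{\theta}}-\mathcal{E}_{\tilde{\boldsymbol{\theta}}}\|_\diamond\le K\,\|\boldsymbol{\theta}-\tilde{\boldsymbol{\theta}}\|_\infty$.

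To prove the Lipschitz bound I would proceed by a telescoping/hybrid argument over the $K$ trainable gates. Write $\mathcal{E}_{\boldsymbol{\theta}}$ as an alternating composition of fixed (non-trainable) channels and the $K$ trainable rotational channels $\mathcal{R}_k(\theta_k)$, and similarly for $\mathcal{E}_{\tilde{\boldsymbol{\theta}}}$. Interpolating one coordinate at a time and applying the subadditivity of the diamond distance (Lemma~\ref{subadd}) repeatedly — noting that the fixed channels and the unchanged rotational channels contribute zero to each difference term because $\|\mathcal{A}-\mathcal{A}\|_\diamond=0$ and they can be peeled off using subadditivity — I get
\begin{equation*}
\|\mathcal{E}_{\boldsymbol{\theta}}-\mathcal{E}_{\tilde{\boldsymbol{\theta}}}\|_\diamond\le\sum_{k=1}^K\|\mathcal{R}_k(\theta_k)-\mathcal{R}_k(\tilde{\theta}_k)\|_\diamond.
\end{equation*}
For each term, $\mathcal{R}_k$ is a unitary channel, so Lemma~\ref{transnorm} gives $\tfrac12\|\mathcal{R}_k(\theta_k)-\mathcal{R}_k(\tilde{\theta}_k)\|_\diamond\le\|R_k(\theta_k)-R_k(\tilde{\theta}_k)\|$, and then Lemma~\ref{transR} bounds the spectral-norm distance of the rotational operators by $\tfrac12|\theta_k-\tilde{\theta}_k|$. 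Chaining these: each summand is at most $|\theta_k-\tilde{\theta}_k|\le\|\boldsymbol{\theta}-\tilde{\boldsymbol{\theta}}\|_\infty$, and there are $K$ of them, so the total is at most $K\|\boldsymbol{\theta}-\tilde{\boldsymbol{\theta}}\|_\infty$, as required.

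Finally, I would invoke the elementary fact recalled in the text that $N([-\pi,\pi]^K,\epsilon',\|\cdot\|_\infty)=\lceil\pi/\epsilon'\rceil^K$, substitute $\epsilon'=\epsilon/K$, and conclude via Lemma~\ref{covernum} that $N(\mathcal{E}_{\boldsymbol{\theta}},\epsilon,\|\cdot\|_\diamond)\le\lceil\pi K/\epsilon\rceil^K$.

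The main obstacle I anticipate is being careful with the telescoping step: one must verify that subadditivity can indeed be applied gate-by-gate even though the non-trainable channels are interleaved between the trainable ones, and that the dimension-matching hypotheses of Lemma~\ref{subadd} are respected at every stage (the intermediate channels act on the appropriate qubit registers). This is bookkeeping rather than a conceptual difficulty, but it is where an imprecise argument could hide an error; everything else is a direct composition of the cited lemmas.
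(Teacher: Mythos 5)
Your proposal is correct and follows essentially the same route as the paper: a telescoping argument via the subadditivity of the diamond distance (Lemma~\ref{subadd}) to reduce to the per-gate terms, then Lemma~\ref{transnorm} and Lemma~\ref{transR} to obtain the $K$-Lipschitz bound $\Vert\mathcal{E}_{\boldsymbol{\theta}}-\mathcal{E}_{\tilde{\boldsymbol{\theta}}}\Vert_{\diamond}\le K\Vert\boldsymbol{\theta}-\tilde{\boldsymbol{\theta}}\Vert_{\infty}$, and finally Lemma~\ref{covernum} together with the covering number of the hypercube. The only cosmetic difference is that the paper passes through the $\ell_1$-norm before relaxing to $K\Vert\cdot\Vert_{\infty}$, whereas you bound each summand by $\Vert\boldsymbol{\theta}-\tilde{\boldsymbol{\theta}}\Vert_{\infty}$ directly; the bounds are identical.
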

\begin{proof}
	From the structure of $\mathcal{E}_{\boldsymbol{\theta}}$, we have
	\begin{align}
		\Vert \mathcal{E}_{\boldsymbol{\theta}}-\mathcal{E}_{\tilde{\boldsymbol{\theta}}}\Vert _{\diamond}&\le \sum_{k=1}^{K}{\Big{\|}  \mathcal{R}_{k}\left( \theta_{k}\right) -\mathcal{R}_{k}\left(  \tilde{\theta}_{k}\right)  \Big{\|}  _{\diamond}}\label{separate1}
		\\
		&\le 2 \sum_{k=1}^{K}{\Big{\|} R_{k}^{\left(i_{k}, j_{k}\right) }\left(\theta_{k} \right)-R_{k}^{\left(i_{k}, j_{k}\right) }\left( \tilde{\theta}_{k} \right) }\Big{\|} \label{trans1}
		\\
		&\le \sum_{k=1}^{K}{\lvert\theta_{k}-\tilde{\theta}_{k}\rvert}\label{trans2}\\ &=\Vert \boldsymbol{\theta}-\tilde{\boldsymbol{\theta}}\Vert _{1}
		\\
		&\le K\Vert \boldsymbol{\theta}-\tilde{\boldsymbol{\theta}}\Vert_{\infty},\label{trans3}
	\end{align}
	where $\mathcal{R}_{k}\left( \theta_{k}\right) $ denotes the quantum channel corresponding to the $J$-qubit rotational operator $R_{k}^{\left(i_{k}, j_{k}\right) }\left(\theta_{k} \right)$. Here, Eq.~(\ref{separate1}) is derived by repeatedly using Lemma~\ref{subadd} to erase the non-trainable quantum channels,  Eqs.~(\ref{trans1}) and (\ref{trans2}) are obtained from Lemma~\ref{transnorm} and Lemma~\ref{transR}, respectively, and  Eq.~(\ref{trans3}) owes to the relation  between the $l_{1}$-norm and $l_{\infty}$-norm for vectors in $\mathbb{R}^{K}$.
	
	Thus, according to Lemma~\ref{covernum}, we have
	\begin{equation*}
		N\left(\mathcal{E}_{\boldsymbol{\theta}},\epsilon,\Vert \cdot\Vert _{\diamond}\right)\le N\left(\boldsymbol{\theta},\frac{\epsilon}{K},\Vert \cdot\Vert _{\infty}\right)=\bigg{\lceil}\frac{\pi K}{\epsilon}\bigg{\rceil}^{K}.
	\end{equation*}
\end{proof}

Now we prove Lemma~\ref{radh} by leveraging the technique of proving Theorem 6 in Supplementary Information of Ref.~\cite{caro2022gen}.

\begin{proof}
	To bound the  Rademacher complexity
	\begin{equation*}
		\mathfrak{R}_{n}\left( \mathcal{H}\right)=
		\mathop{\mathbb{E}}\limits_{S\sim {\mathcal{D}}^{n}}{\mathop{\mathbb{E}}\limits_{\boldsymbol{\sigma}}{\left[ \sup_{h_{\boldsymbol{\theta}}\in\mathcal{H} }\frac{1}{n}\sum_{i=1}^{n}{\sigma_{i}h_{\boldsymbol{\theta}}\left(\boldsymbol{x}_{i} \right)  }\right] }},
	\end{equation*}
	the main idea is to use the chaining technique to  bound the empirical Rademacher complexity $$\mathop{\mathbb{E}}\limits_{\boldsymbol{\sigma}}{\left[ \sup_{h_{\boldsymbol{\theta}}\in\mathcal{H} }\frac{1}{n}\sum_{i=1}^{n}{\sigma_{i}h_{\boldsymbol{\theta}}\left(\boldsymbol{x}_{i} \right)  }\right] }$$ in terms of covering number.
	
	First, with respect to the diamond norm, for each $j\in\mathbb{N}_0$, there exists an $2^{-j}$-covering net denoted by $\mathcal{N}_{j}$ for
	the set of  quantum channels $\left\lbrace \mathcal{E}_{\boldsymbol{\theta}}\right\rbrace $,
	satisfying $N_j\leq \big{\lceil}2^{j}\pi K\big{\rceil}^{K}$. To be specific, for each $j\in\mathbb{N}$ and every parameter setting $\boldsymbol{\theta}$, there exists a quantum operator $\mathcal{E}_{\boldsymbol{\theta},j}\in\mathcal{N}_{j}$ such that
	\begin{equation*}
		\Vert\mathcal{E}_{\boldsymbol{\theta}}-\mathcal{E}_{\boldsymbol{\theta},j}\Vert_{\diamond}\le\frac{1}{2^{j}}.
	\end{equation*}
	For $j=0$, the $1$-covering net of $\left\lbrace \mathcal{E}_{\boldsymbol{\theta}}\right\rbrace $ is $\mathcal{N}_{0}=\left\lbrace 0\right\rbrace $. Moreover, according to Lemma~\ref{TH1}, the covering number $N_{j}$ can be upper bounded by $\big{\lceil}2^{j}\pi K\big{\rceil}^{K}$ .
	
	Then, for any $k\in\mathbb{N}$, we have
	\begin{equation*}
		\mathcal{E}_{\boldsymbol{\theta}}=\mathcal{E}_{\boldsymbol{\theta}}
		-\mathcal{E}_{\boldsymbol{\theta},k}+\sum_{j=k}^{1}{\mathcal{E}_{\boldsymbol{\theta},j}-\mathcal{E}_{\boldsymbol{\theta},j-1}},
	\end{equation*}
	and
	\begin{align}
		\mathop{\mathbb{E}}\limits_{\boldsymbol{\sigma}}{\left[ \sup_{h_{\boldsymbol{\theta}}\in\mathcal{H} }\frac{1}{n}\sum_{i=1}^{n}{\sigma_{i}h_{\boldsymbol{\theta}}\left(\boldsymbol{x}_{i} \right)  }\right] }
		=&\mathop{\mathbb{E}}\limits_{\boldsymbol{\sigma}}{\left[ \sup_{h_{\boldsymbol{\theta}}\in\mathcal{H} }\frac{1}{n}\sum_{i=1}^{n}{\sigma_{i}\cdot{\rm sgn}({\rm Tr}\left[ Z \mathcal{E}_{\boldsymbol{\theta}}\left( \rho\left( \boldsymbol{x}\right)\right) \right])  }\right] }\nonumber
		\\
		=&\mathop{\mathbb{E}}\limits_{\boldsymbol{\sigma}}{\left[ \sup_{h_{\boldsymbol{\theta}}\in\mathcal{H} }\frac{1}{n}\sum_{i=1}^{n}{\sigma_{i}{\rm Tr} \left[  Z \mathcal{E}_{\boldsymbol{\theta}}\left( \rho\left( \boldsymbol{x}_i\right) \right)  \right]  }\right] }\label{remove_sgn}
		\\
		\le& \frac{1}{2^k}+\frac{6}{\sqrt{n}}\sum_{j=1}^{k}{\frac{1}{2^j}\sqrt{\log N_{j}}}\label{lemmab2}
		\\
		\le& \frac{1}{2^k}+\frac{6}{\sqrt{n}}\sum_{j=1}^{k}{\frac{1}{2^j}\sqrt{K\log \lceil 2^j\pi K\rceil}}
		\\
		\le& \frac{1}{2^k}+12\sqrt{\frac{K}{n}}\int_{\frac{1}{2^{\left(k+1\right)}}}^{\frac{1}{2}}{\sqrt{\log\bigg{\lceil} \frac{\pi K}{\alpha}\bigg{\rceil}}}{\rm d}\alpha,
		\\
		\le& \frac{1}{2^k}+12\sqrt{\frac{K}{n}}\int_{\frac{1}{2^{\left(k+1\right)}}}^{\frac{1}{2}}{\sqrt{\log\frac{7K}{2\alpha}}}{\rm d}\alpha,
	\end{align}
	where Eq.~(\ref{lemmab2}) holds due to the expectation over $\boldsymbol{\sigma}$, and Eq.~(\ref{lemmab2}) is derived following a similar analysis as that in Supplementary Information of \cite{caro2022gen}.
	
	By taking the limit $k\rightarrow\infty$, we obtain
	\begin{align}
		\mathop{\mathbb{E}}\limits_{\boldsymbol{\sigma}}{\left[ \sup_{h_{\boldsymbol{\theta}}\in\mathcal{H} }\frac{1}{n}\sum_{i=1}^{n}{\sigma_{i}h_{\boldsymbol{\theta}}\left(\boldsymbol{x}_{i} \right)  }\right] }
		\le\,& 12\sqrt{\frac{K}{n}}\int_{0}^{\frac{1}{2}}{\sqrt{\log\frac{7K}{2\alpha}}}{\rm d}\alpha
		\\
		=\,& 12\sqrt{\frac{K}{n}}\left[\frac{\sqrt{\log 7K}}{2}+\frac{7\sqrt{\pi}K}{4}{\rm erfc}\left(\sqrt{\log 7K} \right) \right] \label{radhbb},
		\\
		\le\,&6\sqrt{\frac{K\log 7K}{n}}+2\sqrt{\frac{K}{n}} \label{radhbbb},
	\end{align}
	where Eq.~(\ref{radhbb}) is derived by using the integral
	\begin{equation}\label{key}
	\int{\sqrt{\log\frac{1}{\alpha}}}{\rm d}\alpha=\alpha\sqrt{\log\frac{1}{\alpha}}-\frac{\sqrt{\pi}}{2}{\rm erf}\left(\sqrt{\log\frac{1}{\alpha}} \right),
	\end{equation}
	with the error function defined by ${\rm erf}\left(x \right)=\frac{2}{\sqrt{\pi}}\int_{0}^{x}{e^{-t^2}}{\rm d}t$ and the complementary error function  ${\rm erfc}\left(x \right)=1-{\rm erf}\left(x \right)$.
	Besides, we notice that the function $K{\rm erfc}\left(\sqrt{\log 7K} \right)$ decreases monotonically with $K\in \mathbb{N}$ and is upper bounded by ${\rm erfc}\left(\sqrt{\log 7} \right) \le 0.0486$, so we have Eq.~(\ref{radhbbb}).	
	
	Thus, by taking expectations of both sides of Eq.~(\ref{radhbbb}) over the training sample set  $S$, we have Eq.~(\ref{radhb}).
\end{proof}

\twocolumngrid
\bibliography{ref}

\end{document}